\newtheorem{definition}{Definition}
\newtheorem{theorem}{Theorem}
\newtheorem{lemma}[theorem]{Lemma}
\definecolor{myBlue}{RGB}{25, 118, 210}
\definecolor{onyx}{rgb}{0.06, 0.06, 0.07}
\definecolor{charcoal}{rgb}{0.1, 0.1, 0.1}
\newcommand{\cellcolortale}[1]{\cellcolor{teal!\fpeval{#1*0.6}}}
\newcommand{\cellcolorred}[1]{\cellcolor{red!\fpeval{#1*0.6}}}
\lstdefinelanguage{Circom}{
    keywords=[1]{signal, component, input, output, var, function, return, include},
    keywords=[2]{template, main, true, false},
    keywordstyle=[1]\color{myBlue}\bfseries,
    keywordstyle=[2]\color{purple}\bfseries,
    identifierstyle=\color{black},
    commentstyle=\color{gray}\ttfamily,
    stringstyle=\color{orange}\ttfamily,
    sensitive=true,
    morecomment=[l]{//},
    morecomment=[s]{/*}{*/},
    morestring=[b]",
}
\tiny\color{gray},
\lstdefinestyle{circomstyle}{
    backgroundcolor=\color{gray!0},
    rulecolor=\color{gray!0},
}
\newlength{\maxlen}
\newcommand{\hquad}{\hspace{0.5em}} 
\newcommand{\cmark}{\textcolor{green!70!black}{\ding{51}}} 
\newcommand{\xmark}{\textcolor{red}{\ding{55}}} 
\newcommand{\dangerous}{\textcolor{red!80!black}{\textbf{Yes}}}
\newcommand{\safe}{\textcolor{green!80!black}{\textbf{No}}}
\newcommand*\partialcircle[1][1ex]{%
  \begin{tikzpicture}
  \draw[fill=green!70] (0,0)-- (90:#1) arc (90:270:#1) -- cycle ;
  \draw (0,0) circle (#1);
  \end{tikzpicture}}
\newcommand{\rqbox}[2]{
    \begin{tcolorbox}[colback=gray!10,colframe=black,boxrule=0.5pt]
        #2
    \end{tcolorbox}
}
\newcommand{\sys}{\textsc{zkFuzz}\xspace}
\newcommand{\sysb}{\textsc{zkFuzz}\xspace}
\newcommand{\sysp}{\textsc{zkFuzz++}\xspace}
\newcommand{\model}{\textsc{TCCT}\xspace}
\newcommand{\nallcircuits}{452\xspace}
\newcommand{\nallbugs}{88\xspace}
\newcommand{\ndetectedbugs}{85\xspace}
\newcommand{\nnewbugs}{59\xspace}
\newcommand{\nconfirmed}{39\xspace}
\newcommand{\nconfirmedzkregex}{11\xspace}
\newcommand{\nfixed}{14\xspace}
\newif\ifpreprint
\newif\ifcomment
\begin{document}
%
\title{\sys: Foundation and Framework \\ for  Effective Fuzzing of Zero-Knowledge Circuits}


\author{\IEEEauthorblockN{Hideaki Takahashi, Jihwan Kim, Suman Jana, Junfeng Yang}
\IEEEauthorblockA{Columbia University\\
New York, New York 10027\\
Email: ht2673@columbia.edu, jk4908@columbia.edu, suman@cs.columbia.edu, junfeng@cs.columbia.edu}
}


%


\maketitle
\renewcommand{\thefootnote}{\fnsymbol{footnote}}
\footnotetext[1]{This paper has been accepted at IEEE Symposium on Security and Privacy (S\&P) 2026.}
\renewcommand{\thefootnote}{\arabic{footnote}}

\begin{abstract}

Zero-knowledge (ZK) circuits enable privacy-preserving computations and are central to many cryptographic protocols. Systems like Circom simplify ZK development by combining witness computation and circuit constraints in one program. However, even small errors can compromise security of ZK programs ---\ under-constrained circuits may accept invalid witnesses, while over-constrained ones may reject valid ones. Static analyzers are often imprecise with high false positives, and formal tools struggle with real-world circuit scale. Additionally, existing tools overlook several critical behaviors, such as intermediate computations and program aborts, and thus miss many vulnerabilities.

{Our theoretical contribution is the Trace-Constraint Consistency Test (\model), a foundational, language-independent formulation of ZK circuit bugs. \model provides a unified semantics that subsumes prior definitions and captures both under- and over-constrained vulnerabilities, exposing the full space of ZK bugs that elude prior tools.}

    Our systems contribution is \sys, a novel program mutation-based fuzzing framework for detecting \model violations. \sys systematically mutates the computational logic of Zk programs guided by a novel fitness function, and injects carefully crafted inputs using tailored heuristics to expose bugs. We evaluated \sys on \nallcircuits real-world ZK circuits written in Circom, a leading programming system for ZK development. \sys successfully identified \ndetectedbugs bugs, including \nnewbugs zero-days—\nconfirmed of which were confirmed by developers and \nfixed fixed, including bugs undetectable by prior works due to their fundamentally limited formulations, earning thousands of bug bounties. Our preliminary research on Noir, another emerging DSL for ZK circuit, also demonstrates the feasibility of \sys to support multiple DSLs.

\end{abstract}


%
\IEEEpeerreviewmaketitle

\section{Introduction}

Zero-knowledge (ZK) circuits have emerged as a foundational technology for privacy-preserving computation with succinct proofs and efficient verification~\cite{lavin2024survey, morais2019survey, hasan2019overview}. Their adoption spans a variety of critical sectors, including anonymous cryptocurrencies~\cite{ruj2024zero}, confidential smart contracts~\cite{yang2020zero, steffen2022zeestar}, secure COVID-19 contact tracing protocols~\cite{liu2020privacy}, privacy-preserving authentication~\cite{baldimtsi2024zklogin}, and verifiable machine learning systems~\cite{xing2023zero, zhang2024research}. Industry forecasts predict that the ZK market will reach \$10B by 2030, with Web3 applications alone expected to execute nearly 90 billion proofs~\cite{ProtocolLabsZK}.

Manually constructing ZK circuits is complex and error-prone, so developers often use programming systems like Circom~\cite{belles2022circom}, which allow them to express both the \textit{computation} for deriving the secret witness from the input signals and the \textit{constraints} of the finite-field arithmetic circuit (used to generate and verify proofs later) within one ZK program written in a high-level, domain-specific language (DSL). The DSL compiler then translates the program into (1) an executable (e.g., in WASM) for witness generation and (2) a set of constraints that define the circuit and support proof verification.

Even with high-level programming systems, developing correct circuits remains notoriously difficult due to two core challenges. First, the computation and the constraints must align exactly, yet they operate in different domains: general-purpose logic vs. polynomial constraints optimized for efficiency. Developers must often manually translate logic into polynomial form, and any mismatch results in incorrect or unverifiable proofs~\cite{wen2024practical}. Second, the non-intuitive behavior of modular arithmetic in finite fields often confuses unfamiliar developers. 

Unsurprisingly, these challenges lead to vulnerabilities: \textit{under-constrained} circuits that allow false proofs, and \textit{over-constrained} circuits that reject valid witnesses. Both can result in serious security breaches~\cite{chaliasos2024sok,liang2025sok}. For instance, an under-constrained bug in zkSync—a widely used zk-Rollup—allowed a malicious prover to extract \$1.9 billion worth of funds at the time of disclosure~\cite{chainlight_zksync_era_write_query_poc,tang2024zero}. {While the risk of under-constrained circuits is widely acknowledged~\cite{pailoor2023automated,liang2025sok,chaliasos2024sok}, over-constrained circuits can also cause severe denial-of-service vulnerabilities for honest provers.} {For example, one over-constrained bug was recently discovered in RISC-Zero, the most widely used zkVM~\cite{veridise2025risc0}, which silently rejected valid execution traces, effectively compromising the integrity of the virtual machine.}

Existing methods for detecting vulnerabilities in ZK programs are limited. Static analysis tools rely on patterns and miss deep semantic issues, leading to high false positives that erode developer trust~\cite{wen2024practical,circomspect,coverity:cacm}. Formal methods offer better precision but don’t scale to large circuits due to SMT solver bottlenecks~\cite{isabel2024scalable,liu2024certifying,pailoor2023automated,chen2024ac4,jiang2025conscs}. Dynamic approaches focus on compilation or verification systems and don’t generalize to individual ZK programs~\cite{hochrainer2024fuzzing,leeb2024metamorphic,xiao2025mtzk,fan2024snarkprobe}.

Furthermore, existing formal definitions of ZK vulnerabilities are incomplete, leaving many bugs undetected. In particular, prior work focused only on constraints, ignoring program semantics, and defined bugs solely as non-determinism in constraints—situations where multiple outputs for the same input all satisfy the constraints~\cite{pailoor2023automated, chen2024ac4, jiang2025conscs}. This omission misses two major classes: over-constrained bugs and under-constrained bugs due to abnormal termination, where inputs can crash witness generation while still satisfying circuit constraints. Together, these overlooked categories account for nearly 50\% of real-world ZK bugs in our evaluation.

Our first contribution in this paper is a new theoretical formulation, the Trace-Constraint Consistency Test (\model), that models the vulnerabilities in ZK programs as discrepancies between $(1)$ all possible execution traces that a computation may produce and $(2)$ the set of input, intermediate, and output values permitted by the circuit constraints. {\model captures both under-constrained, or \textit{soundness}, vulnerabilities—when an invalid trace is accepted by the circuit constraints—and over-constrained, or \textit{completeness}, vulnerabilities—when a valid trace is incorrectly rejected. TCCT provides two major advantages over prior definitions. First, it offers a \textit{unified semantics}: TCCT is the first framework to capture both under- and over-constrained vulnerabilities within the same formal model, correctly accounting for intermediate computations and handling abnormal termination in witness generation. This unification subsumes prior definitions as special cases and exposes the full space of ZK vulnerabilities. Second, TCCT is \textit{language-agnostic}: it operates at the level of trace/constraint properties rather than DSL syntax. While our implementation, zkFuzz, currently supports Circom and has very preliminary support for Noir, many DSLs already compile to standardized formats (e.g., R1CS), and by similarly defining a universal format for execution traces, zkFuzz can readily extend across diverse ZK languages.}

Our second contribution is the design and implementation of \sys, a novel dynamic analysis framework that leverages fuzzing with program mutations to detect \model violations in ZK programs. To uncover over-constrained vulnerabilities, \sys systematically fuzzes inputs to the witness computation in search of valid execution traces incorrectly rejected by the circuit constraints. To uncover under-constrained vulnerabilities, it mutates the witness computation and searches for inputs that produce different outputs accepted by the circuit constraints. Since these outputs are produced by a mutated computation that semantically differs from the original, they are, by construction, invalid and should not be accepted. \sys is fully automated, and for every bug detected, it provides a concrete counterexample, greatly simplifying diagnosis.

A key challenge for \sys is the vast search space. Although ZK program inputs are finite, they belong to large prime fields, and behavior in small fields does not generalize to larger fields (e.g., \texttt{iszero(2 + 3)} is true in $\mathbb{F}_5$ but false in $\mathbb{F}_7$). Blindly searching these large fields with many input signals is computationally expensive. The program mutation space is even larger—practically infinite—since arbitrary programs may produce outputs accepted by buggy circuits. Traditional heuristics like coverage-guided fuzzing are ineffective, as \sys seeks specific traces that expose vulnerabilities, not broad code coverage.

To search this vast space efficiently, \sys adopts a joint input and program mutation-based evolutionary fuzzing algorithm. Given a witness computation program $P$ and circuit constraints $C$, \sys generates multiple mutants of $P$ and heuristically samples multiple input values likely to lead to bugs. Unlike traditional fuzzing, which typically mutates only the inputs, \sys mutates both the program and inputs. It executes all program mutants on all sampled inputs and checks the resulting traces against $C$ for \model violations.

In each iteration, \sys scores program mutants using a novel min-sum fitness function that estimates the likelihood of triggering a violation. For each input, it sums constraint errors across all constraints and assigns the mutant the minimum such sum across inputs. Unlike prior sum-only methods~\cite{chen2022jigsaw}, this approach is better suited for joint program and input fuzzing. \sys then performs crossover, favoring lower-scoring mutants, and evaluates the resulting offspring on newly sampled inputs. This process repeats until a violation is found or a timeout occurs. Each iteration is independent once crossover mutants are generated, and \sys further improves efficiency via heuristics that prioritize mutation of program statements and input regions most likely to cause bugs.

In our evaluation, we implement \sys for Circom~\cite{belles2022circom}, the most popular programming system for developing ZK circuits~\cite{wen2024practical}, and compare it with four state-of-the-art detection tools on \nallcircuits real-world ZK circuits. Our results show that \sys consistently outperforms all other methods, finding 30\% to 300\% more bugs, without any false positives. In total, \sys found \ndetectedbugs bugs, including \nnewbugs previously unknown zero-days—\nconfirmed of which were confirmed by developers and \nfixed fixed. For example, \nconfirmedzkregex under-constrained bugs in \textit{zk-regex}~\cite{zkregex}, a popular ZK regex verification tool, were confirmed and awarded bug bounties. These bugs allow malicious provers to generate bogus proofs claiming that they possess a string with certain properties (e.g., a valid email address) without actually having such a string. Another confirmed bug in \textit{passport-zk-circuits}~\cite{passport-zk-circuits}, a ZK-based biometric passport project, allowed attackers to forge proofs of possessing data matching a required encoding. We also prototype \sys for Noir~\cite{aztec2024noir}, another emerging DSL, and evaluate it on real Noir circuits. Our code is publicly available at \url{https://github.com/Koukyosyumei/zkFuzz}.

In summary, we make the following key contributions:

\begin{itemize}
  \item Introduce \model, a language-independent formal framework capturing both under- and over-constrained ZK bugs through trace-constraint inconsistencies.
  \item Develop \sys, a fuzzer that detects \model violations by jointly mutating programs and inputs.
  \item Propose a joint evolutionary fuzzing algorithm with a novel min-sum fitness function and guided crossover.
  \item Evaluate \sys on \nallcircuits real-world ZK circuits written in Circom, finding 30--300\% more bugs than prior tools with zero false positives.
  \item Implement a preliminary prototype for Noir, another popular DSL for ZK circuits not supported by existing tools, and demonstrate that \sys can detect a real ZK bug reported previously in Noir circuits.
\end{itemize}

\begin{figure*}[!ht]
    \centering
    \includegraphics[width=\linewidth]{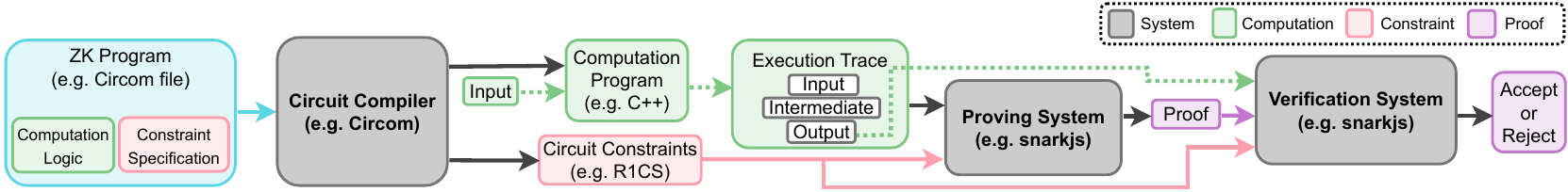}
    \caption{Overview of ZK proof systems. The circuit compiler processes a ZK program into a witness computation program and circuit constraints. Prover executes the witness program to obtain a trace (witness and public values) and creates a proof using the proving system. The verifier validates the proof using the verification system with the public output, the constraint, and the proof. Our fuzzer checks the inconsistencies between the computation logic and the circuit constraints in the ZK program.}
    \label{fig:overview-zkp}
\end{figure*}

\section{Background}
\label{sec:background}

This section provides an overview of ZK Proof systems (\S~\ref{subsec:zkp-system}), bugs in ZK circuits (\S~\ref{subsec:ov}), and Circom, the most popular programming system for ZK circuits~\cite{wen2024practical} (\S~\ref{subsec:circom-lang}).

\subsection{ZK Proof Systems}
\label{subsec:zkp-system}

ZK proofs enable a \textit{prover} to convince a \textit{verifier} of the validity of a statement without revealing any information beyond the statement's truth~\cite{LI201425}. Formally, ZK proof systems satisfy three properties: 

\begin{itemize}
\item \textbf{Completeness}: If the statement is true, an honest prover convinces an honest verifier.
\item \textbf{Soundness}: A malicious prover cannot convince the verifier of a false statement.
\item \textbf{Zero-Knowledge}: The proof reveals nothing about the secret input beyond the statement's validity.
\end{itemize}

\noindent 
These properties enable the construction of privacy-preserving verifiable systems, such as anonymous cryptocurrencies (e.g., Zcash's zk-SNARKs), confidential smart contracts, and verifiable voting protocols. Modern applications extend to blockchain scaling (e.g., ZK-Rollups) and verification of machine-learning integrity~\cite{lavin2024survey,garg2023experimenting,abbaszadeh2024zero}.

At a high level, a ZK system operates on arithmetic circuits that perform computations over a finite field, where all variables and operations (e.g., addition, multiplication) are defined modulo a large prime. It exposes two primitives: $\mathrm{Prove}$ and $\mathrm{Verify}$. The ZK proof process begins with the prover running $\mathrm{Prove}$ to generate a proof: $\pi \gets \mathrm{Prove}(\mathcal{C}, w)$ where $\mathcal{C}$ is the circuit and $w$ the prover's secret value, known as the \textit{witness}. Importantly, the proof $\pi$ is constructed to reveal no information about $w$. Once $\pi$ is generated, the prover sends it to the verifier who then calls $\mathrm{Verify}(\mathcal{C}, \pi)$ which returns either $accept$ or $reject$. For performance reasons, it is often desirable for the proof to be succinct and non-interactive.

\noindent{\textbf{ZK Programming Infrastructure.}} ZK protocols assume a static secret $w$ but, in practice, developers frequently want to make claims about computations, such as proving membership in a Merkle tree for off-chain transactions---where the witness $w$ includes not just a secret value, but also a computation trace (e.g., a Merkle path) that must satisfy the circuit constraints. For efficient proof generation and verification, the constraints are often expressed in quadratic form like in R1CS (Rank-1 Constraint System). Therefore, general computations 
do not directly map to the circuit constraints.

Manually constructing circuits for such computations and determining which values belong in the witness is complex and error-prone, so developers often use ZK programming systems~\cite{lavin2024survey} like Circom, which allow them to express the witness computation and the circuit constraints in a single ZK program written in a high-level DSL. The DSL compiler then translates the program into a computation program $\mathcal{P}$ and the corresponding set of circuit constraints $\mathcal{C}$. When $\mathcal{P}$ executes, it produces a full execution trace, including not only the input $x$ and output $y$, but also important intermediate values $z$, to match the required constraint format. For example, computing $y = x^4$ typically requires two quadratic constraints: $z = x^2$ and $y = z^2$. The resulting execution trace $(x, z, y)$ becomes the witness (typically, $y$ is public, while $x$ and $z$ are secret; however, developers sometimes make $x$ public as well, depending on the application and desired level of privacy.) Fig.~\ref{fig:overview-zkp} illustrates this process, along with the subsequent use of proving and verification systems like snarkjs~\cite{snarkjs}.

Although some systems like Noir~\cite{aztec2024noir} support automatic generation of constraints from the computations, the resulting constraints are often significantly less efficient, increasing the number of constraints by 3x to 300x~\cite{noir_lang_zk_bench}. {Consequently, many Noir projects use \textit{unsafe} mode, where programmers manually write constraints like Circom, introducing the risk of discrepancy between the computation program and constraints.} In the remainder of our paper, we focus on practical ZK systems that produce efficient constraints. 

\subsection{Overview of Vulnerabilities in ZK Programs}
\label{subsec:ov}

Even with high-level programming systems, general computations often involve non-quadratic operations that cannot be directly mapped to the required constraint format. As a result, it is the developer’s responsibility to ensure that the intended computation $\mathcal{P}$ and the corresponding circuit constraints $\mathcal{C}$ are aligned; any mismatch may lead to incorrect or unverifiable proofs. Unfamiliarity with finite-field arithmetic further complicates matters, as its modular behavior introduces unexpected corner cases. Consequently, ZK programs are notoriously difficult to get right~\cite{pailoor2023automated,chen2024ac4} and may contain the following two classes of vulnerabilities.

\begin{itemize}
\item{\textbf{Under-Constrained Circuits.}} $\mathcal{C}$ is too loose for $\mathcal{P}$, allowing malicious provers to convince verifiers that "I know the input whose corresponding output is $y$," even when they do not. Such vulnerabilities violate the \emph{soundness} property.

\item{\textbf{Over-Constrained Circuit.}} $\mathcal{C}$ is too strict for $\mathcal{P}$, preventing honest provers from generating valid proofs for some correct traces. Such vulnerabilities violate the \emph{completeness} property.
\end{itemize}

Code~\ref{lst:program} and~\ref{lst:under-constraint} show the computation and constraints, respectively, for an under-constrained vulnerability caught by \sys 
in a real-world 1-bit right-shift program from the \textit{circom-monolith} library~\cite{circom-monolith}. Since right shift is not a quadratic operation and cannot be directly encoded as a constraint, the developer introduces an intermediate value \texttt{b}, defined as the difference between \texttt{x} and \texttt{2y}, where \texttt{y = x >> 1}. 

Unfortunately, due to quirks in finite-field arithmetic, multiple assignments to \texttt{(y, b)} can satisfy the constraints for a given input \texttt{x}. For example, in the finite field $\mathbb{F}_{11}$, if \texttt{x = 7}, the expected output is \texttt{y = 3} and \texttt{b = 1}. Yet, \texttt{y = 9}, \texttt{b = 0} also satisfy the constraints, because $7 - 9 \cdot 2 = -11 \equiv 0 \bmod 11$. The circuit is thus under-constrained, allowing an invalid proof.

In contrast, the constraints in Code~\ref{lst:over-constraint} are too strict. They enforce \texttt{b = 0}, thereby rejecting valid traces such as \texttt{\{x: 3, y: 1, b: 1\}}, and preventing an honest prover from generating a valid proof.

\ifpreprint
The safe implementation of 1-bit right shift using the bit array can be found in Appendix~\ref{appendix:examples}.
\fi

\begin{figure}[!ht]
  \centering
  \begin{minipage}[t]{0.37\linewidth}
    \centering
    \lstset{frame=single}
    \begin{lstlisting}[caption={%Computation of
    1-bit R shift}, label={lst:program}, language={rust}]
fn RShift1(x) {
 y = x >> 1
 b = x - y * 2
 assert(b*(1-b) == 0)
 return y
}
    \end{lstlisting}
  \end{minipage}
  \hfill
  \begin{minipage}[t]{0.56\linewidth}
    \centering

    \lstset{frame=single}
\begin{lstlisting}[caption={Under-constrained circuit.}, label={lst:under-constraint}]
(b = x-y*2) && (b*(1-b) = 0)
\end{lstlisting}
    
    \lstset{frame=single}
\begin{lstlisting}[caption={Over-constrained circuit.}, label={lst:over-constraint}]
(b = x - y*2) && (b = 0)
\end{lstlisting}
  \end{minipage}
\end{figure}

\WFclear

\noindent{\textbf{Out-of-Scope Vulnerabilities.}} \sys focuses on soundness and completeness vulnerabilities in ZK proofs that arise from developer mistakes in tying constraints to computation. The following types of vulnerabilities are out of scope: (1) vulnerabilities in the underlying compilation, proving, and verification systems (e.g., the Frozen Heart vulnerability caused by an insecure implementation of the Fiat-Shamir transformation~\cite{tang2024zero}); (2) bugs solely in the computation logic, such as unused inputs in hashing~\cite{wen2024practical} (such unused inputs may be acceptable depending on the use case, as in computing the trace of a matrix where only diagonal elements are used); and (3) bugs solely in ZK protocol design that leak information and violate the zero-knowledge property~\cite{chaliasos2024sok}, inline with all prior detectors~\cite{circomspect,pailoor2023automated,jiang2025conscs,chen2024ac4,wen2024practical}---what constitutes a secret is highly application-dependent: e.g., in Validity Rollups~\cite{chen2022review}, ZK proofs are used to accelerate transactions rather than to conceal them.

\subsection{Circom Primer}
\label{subsec:circom-lang}

While our vulnerability definition and \sys techniques apply to different ZK programming systems, this paper focuses on Circom, the most popular DSL and infrastructure for developing ZK circuits. 
\\
\noindent{\textbf{Value Domains and Operators.}} In Circom, all computations are performed over a finite field $\mathbb{F}_q$, where $q$ is a large prime number. The language supports standard arithmetic operators such as addition ($+$), subtraction ($-$), multiplication ($*$), and division ($/$), all performed modulo $q$. Additionally, Circom provides integer division ($\backslash$), bitwise, and comparison operators for defining complex circuit logic. The operands or results of these operators often need implicit conversion between integer and finite fields.
\\
\noindent{\textbf{Signals and Variables.}} Circom enforces a strict separation between constrained and unconstrained computation through two data types.
\textit{Signals} are the primary data type and define the flow of data and the constraints that must be satisfied. They are immutable once defined and can serve as input, output, or intermediate values.
In contrast, \textit{variables}, declared using the \texttt{var} keyword, are unconstrained local values used for the computations but are not tracked by the constraint system. Unlike signals, variables are mutable and can be reassigned during execution. 
\\
\noindent{\textbf{Constraint and Assignment Semantics.}} Circom constraints follow a quadratic form: the product of two linear expressions equal a third linear expression. To express such constraints alongside computations,  Circom introduces the following operators.

\textit{Weak Assignment} (\verb|<--|) assigns a value to a signal \emph{without} generating a constraint,  usually used for intermediate computations where constraints are unnecessary.

\textit{Strong Assignment} (\verb|<==|) assigns a value to a signal \emph{and} generates a constraint, ensuring the assigned value is enforced by the circuit.

\textit{Assert} (\verb|assert|) inserts a runtime assertion into the computation \emph{without} adding a constraint.

\textit{Equality Constraint} (\verb|===|) asserts equality between two signals \emph{and} generates a constraint. If the \texttt{constraint\_} \texttt{assert\_} \texttt{disabled} flag is set, the assertion is \emph{not} inserted into the computation.

\textit{Variable Assignment} (\verb|=|) assigns a value to a variable, and does not add a constraint to the circuit.

Other than \verb|===|, Circom operators do not allow adding constraints without performing the corresponding computation, so \verb|===| with \texttt{constraint\_assert\_disabled} set is the only way to introduce over-constrained bugs. Nonetheless, over-constrained circuits can arise in other languages such as zigren~\cite{veridise2025risc0} and halo2~\cite{halo2,soureshjani2023automated}, making it important to define them formally.

\noindent{\textbf{Templates and Components.}} Circom \textit{templates} are parameterized circuit blueprints that can be instantiated as \textit{components} by other circuits. The parameters can be, for example, the dimensions of input arrays, making them highly flexible. Templates and components enable modular design, but also introduce the risk that bugs in one template can propagate across all circuits that instantiate it, potentially affecting many downstream ZK programs.

\section{Definitions of ZK Program Vulnerabilities}
\label{sec:formulation}

This section formulates the typical ZK circuit's vulnerabilities as the \textbf{Trace-Constraint Consistency Test (TCCT)}, a unified {language-agnostic} model that rigorously formulates bugs within ZK circuits as {discrepancies between traces produced by a witness-generation program on possible inputs and the constraints enforced by the circuit.} 

\subsection{Incompleteness of Prior Definitions}
{Prior formal definitions of ZK program vulnerabilities are incomplete on two key fronts, as we elaborate below. Both gaps stem from prior work’s exclusive focus on constraints while neglecting the computation logic in witness generation.}

First, when detecting under-constrained vulnerabilities, they focus on identifying nondeterministic circuit constraints~\cite{pailoor2023automated,jiang2025conscs}, while ignoring computation aborts. If the circuit constraints $\mathcal{C}$ allow the same input $x$ to map to an output $y'$ different from the computed output $y$, the circuit is clearly under-constrained. However, even if $\mathcal{C}$ is deterministic—accepting only one value of $y$ for a given $x$—the circuit can still be under-constrained if the computation $\mathcal{P}$ aborts on $x$ (e.g., due to a runtime assertion). Attackers can exploit such aborts by crafting inputs that cause the computation to fail while still satisfying the circuit constraints, thereby producing bogus proofs for traces that never actually occurred. Our evaluation shows that such abort vulnerabilities account for nearly 50\% of the real-world bugs \sys identified.

Second, when detecting over-constrained vulnerabilities, prior work~\cite{chen2024ac4} considers only a special case where $\mathcal{C}$ permits an empty set of traces. A complete definition must consider the full set of execution traces produced by $\mathcal{P}$ and those accepted by $\mathcal{C}$. In particular, it must account for intermediate values. Suppose $\mathcal{P}$ produces a trace $(x, z, y)$ for input $x$, but $\mathcal{C}$ accepts only $(x, z', y)$ where $z \neq z'$. In this case, $\mathcal{C}$ is over-constrained with respect to $\mathcal{P}$, as an honest prover with the valid trace $(x, z, y)$ would be unable to generate a valid proof. If, however, $\mathcal{C}$ accepts both $(x, z, y)$ and $(x, z', y)$ (a common optimization in practice), there is no violation, as the prover can produce a valid proof.

\subsection{Trace-Constraint Consistency Test}

Let $\mathbb{F}_q$ be a finite field of a prime order $q$.

\begin{definition}[ZK Program]
A \textit{ZK program} is defined as a pair of a computation and a set of circuit constraints $(\mathcal{P}, \mathcal{C})$:

$\mathcal{P} : \mathbb{F}_q^n \to \mathbb{F}_q^k \times (\mathbb{F}_q^m \cup \{\bot\})$ is a \textit{computation} that takes as input a tuple $(x_1, x_2, \dots, x_n) \in \mathbb{F}_q^n$, and behaves as follows:
  \begin{itemize}
    \item On successful termination, outputs a pair $(z, y)$ where $z = (z_1, \dots, z_k) \in \mathbb{F}_q^k$ represents intermediate values necessary to construct a witness and $y = (y_1, \dots, y_m) \in \mathbb{F}_q^m$  the final outputs. Each value is assigned exactly once.
    \item On abnormal termination, it returns the special value $(\_, \bot)$.
  \end{itemize}

$\mathcal{C} : \mathbb{F}_q^n \times \mathbb{F}_q^k \times \mathbb{F}_q^m \to \{\text{true}, \text{false}\}$ is a set of circuit \textit{constraints}, which evaluates to \texttt{true} if the given input, intermediate, and output values satisfy the required conditions; and \texttt{false} otherwise.
\end{definition}

\begin{definition}[Execution Trace]
    For an input $x \in \mathbb{F}_q^n$ and computation $\mathcal{P}$, if $\mathcal{P}(x) = (z, y)$ where $z \in \mathbb{F}_q^k$ and $y \in \mathbb{F}_q^m$, then the triplet $(x, z, y)$ is called an \textit{execution trace} of $\mathcal{P}$.
\end{definition}

The trace set $\mathcal{T}(\mathcal{P})$ is the set of all possible traces, except aborts, generated by the computation $\mathcal{P}$ while the constraint satisfaction set $\mathcal{S}(\mathcal{C})$ is the set of all tuples satisfying the circuit constraints $\mathcal{C}$.

\begin{definition}[Trace Set]
For a computation $\mathcal{P}$, the trace set $\mathcal{T}(\mathcal{P})$ is defined as:

\begin{align*}
    \mathcal{T}(\mathcal{P}) := \{(x, z, y) \mid &x \in \mathbb{F}^{n}_q, z \in \mathbb{F}^{k}_q, y \in \mathbb{F}^{m}_q, \mathcal{P}(x) = (z, y)\}
\end{align*}
\end{definition}

\begin{definition}[Constraint Satisfaction Set] For circuit constraints $\mathcal{C}$, the constraint satisfaction set $\mathcal{S}(\mathcal{C})$ is defined as:

\begin{align*}
\mathcal{S}(\mathcal{C}) := 
\left\{
  \begin{array}{r|l}
    \vphantom{\begin{array}{l}x \in \mathbb{F}^n_q \\ C(x, z, y) = \text{true} \end{array}} 
    (x, z, y)
    &
    x \in \mathbb{F}^n_q,\ z \in \mathbb{F}^k_q,\ y \in \mathbb{F}^m_q, \\
    &
    C(x, z, y) = \text{true}
  \end{array}
\right\}
\end{align*}
\end{definition}

\noindent In addition, we introduce an operator that projects a set of tuples of input, intermediate, and output values to a set of pairs of input and output values.

\begin{definition}[Projection]
Let $\{(x_1,$$ z_1,$$ y_1),$$ $$ \hdots\}$$ \subseteq 2^{\mathbb{F}^{n}_q \times \mathbb{F}^{k}_q \times \mathbb{F}^{m}_q}$ be a set of tuples of inputs, intermediates and outputs. Then, we define the projection operator $\Pi_{xy}$ as $\Pi_{xy}(\{(x_1, z_1, y_1), \hdots\}) := \{(x_1, y_1), \hdots\}$.
\end{definition}

Given the above building blocks, we formally define under-constrained and over-constrained circuits.

\begin{definition}[Under-Constrained Circuit]
We say that $\mathcal{C}$ is \textit{under-constrained} for $\mathcal{P}$ if

\begin{equation}
\label{eq:uc}
\Pi_{xy}(\mathcal{S}(\mathcal{C})) \setminus \Pi_{xy}(\mathcal{T}(\mathcal{P})) \neq \emptyset
\end{equation}

\end{definition} 

\noindent Intuitively, $\mathcal{C}$ is under-constrained for $\mathcal{P}$ if $\mathcal{C}$ accepts $(x, \_, y)$ but $\mathcal{P}(x) = (\_, y')$ and $y \neq y'$ or $\mathcal{P}(x)$ aborts. Projection out the intermediate values here is necessary for the following reason. In contrast to over-constrained vulnerabilities, if $\mathcal{P}(x) = (z, y)$ and $\mathcal{C}$ accepts $(x, z', y)$ such that $z \neq z'$, it is \emph{not} an under-constrained vulnerability, since malicious provers still cannot create bogus proofs asserting that $\mathcal{P}(x)$ yields an output $y' \neq y$. Similarly, even if $\mathcal{C}$ accepts both $(x, z, y)$ and $(x, z', y)$, adversaries cannot forge proofs with outputs differing from $y$.

This definition naturally takes care of abnormal termination of the computation. Any input $x$ that causes $\mathcal{P}$ to abort will not appear in the trace set $\mathcal{T}(\mathcal{P})$. If $\mathcal{S}(\mathcal{C})$ includes any tuple of the form $(x, \_, \_)$, then the circuit is under-constrained, as it accepts an input that the computation cannot successfully process.


\begin{definition}[Over-Constrained Circuit]
We say that $\mathcal{C}$ is \textit{over-constrained} for $\mathcal{P}$ if

\begin{equation}
\label{eq:oc}
\mathcal{T}(\mathcal{P}) \setminus \mathcal{S}(\mathcal{C}) \neq \emptyset    
\end{equation}

\end{definition}

\noindent
Intuitively, $\mathcal{C}$ is over-constrained w.r.t. $\mathcal{P}$ if any valid trace of $\mathcal{P}$ is not accepted by $\mathcal{C}$. For example, if $\mathcal{P}(x)$ yields output $y$, but $\mathcal{C}$ accepts only $y' \neq y$ for the same input $x$, then $\mathcal{C}$ is over-constrained.

This definition must include intermediate values $z$, as they are essential for satisfying $\mathcal{C}$. Given $\mathcal{P}(x) = (z, y)$, consider the two cases discussed in the previous subsection:
(1) $\mathcal{C}$ accepts only $(x, z', y)$ with $z' \neq z$. Since $\mathcal{T}(\mathcal{P}) \setminus \mathcal{S}(\mathcal{C}) \neq \emptyset$, $\mathcal{C}$ is over-constrained.
(2) $\mathcal{C}$ accepts both $(x, z, y)$ and $(x, z', y)$. Since execution trace $(x, z, y)$ occurs in $\mathcal{S}(\mathcal{C})$, this is not an over-constrained vulnerability, even though multiple traces share the same input $x$ and output $y$. In fact, allowing a broader range of intermediate values in $\mathcal{C}$ is a common optimization in practical ZK circuit design.


We now combine Eq.~\ref{eq:uc} and Eq.~\ref{eq:oc} to test the consistency between the trace and the constraint satisfaction sets:

\begin{definition}[Trace-Constraint Consistency Test]
For computation $\mathcal{P}$ and constraints $\mathcal{C}$, the \textit{Trace-Constraint Consistency Test} (\model) ascertains the following:

\begin{equation}
\label{tcct-condition}
\Pi_{xy}(\mathcal{S}(\mathcal{C})) \setminus \Pi_{xy}(\mathcal{T}(\mathcal{P})) = \emptyset  \hquad \land \hquad \mathcal{T}(\mathcal{P}) \setminus \mathcal{S}(\mathcal{C}) = \emptyset
\end{equation}

\noindent If Eq.~\ref{tcct-condition} holds, we say that $\langle \mathcal{P}, \mathcal{C} \rangle$ is \textit{well-constrained}, which is equivalent to $\mathcal{T}(\mathcal{P}) \subseteq \mathcal{S}(\mathcal{C})$ $\land$ $\Pi_{xy}(\mathcal{T}(\mathcal{P})) = \Pi_{xy}(\mathcal{S}(\mathcal{C}))$, meaning that all execution traces satisfy the constraints, and the set of input-output pairs is identical between the execution traces and the constraint satisfaction set.

\end{definition}



The computational complexity of \model is as follows:

\begin{theorem}
\label{thm:under-constrained-detection-np-complete}
Let $\langle \mathcal{P}, \mathcal{C} \rangle$ denote a \model instance.

\noindent a) Deciding if $\mathcal{C}$ is under-constrained for $\mathcal{P}$  is NP-complete.

\noindent b) Deciding if $\mathcal{C}$ is over-constrained for $\mathcal{P}$  is NP-complete.

\noindent c) TCCT is co-NP-complete.
 
\end{theorem}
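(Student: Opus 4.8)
The plan is to fix the standard computational model in which $\mathcal{P}$ is a deterministic, polynomial-time computation (a straight-line program or polynomial-size arithmetic circuit over $\mathbb{F}_q$, with runtime assertions modeling aborts) and $\mathcal{C}$ is a conjunction of polynomial (R1CS-style) constraints; with $q$ encoded in binary, both $\mathcal{P}(x)$ and $\mathcal{C}(x,z,y)$ are then evaluable in time polynomial in the instance size. All three upper bounds follow from short, checkable certificates. For (a), a certificate is a tuple $(x,z,y)$: the verifier confirms $\mathcal{C}(x,z,y)=\mathrm{true}$ and, exploiting determinism of $\mathcal{P}$, runs $\mathcal{P}(x)$ once to confirm that it either aborts or returns some $(\_,y')$ with $y'\ne y$, i.e. $(x,y)\notin\Pi_{xy}(\mathcal{T}(\mathcal{P}))$; so (a) is in NP. For (b), a certificate is an input $x$: the verifier runs $\mathcal{P}(x)$, checks it does not abort, and verifies $\mathcal{C}(x,z,y)=\mathrm{false}$ on the resulting trace; so (b) is in NP. For (c), the complement of \model is exactly ``under-constrained \emph{or} over-constrained,'' so a nondeterministic machine guesses the bug type together with the corresponding witness from (a) or (b); the complement is thus in NP and \model is in co-NP.

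The hardness arguments all reduce from Boolean satisfiability (3-SAT), using the standard arithmetization of a formula $\phi(v_1,\dots,v_t)$ into R1CS-style constraints: booleanity conjuncts $x_i(x_i-1)=0$, clause-gadget conjuncts tying intermediate values $z$ to $x$, and a final output equal to $1$ exactly when all clauses are satisfied. For (a) I take $\mathcal{C}$ to accept $(x,z,y)$ iff $x$ is a satisfying Boolean assignment of $\phi$ and $y=1$ (a pure conjunction, hence a legitimate constraint system), and let $\mathcal{P}$ be the constant computation $\mathcal{P}(x)=(z_0,0)$. Then $\Pi_{xy}(\mathcal{S}(\mathcal{C}))$ lives entirely in the slice $y=1$ while $\Pi_{xy}(\mathcal{T}(\mathcal{P}))$ lives in $y=0$, so $\Pi_{xy}(\mathcal{S}(\mathcal{C}))\setminus\Pi_{xy}(\mathcal{T}(\mathcal{P}))=\Pi_{xy}(\mathcal{S}(\mathcal{C}))$, which is nonempty iff $\phi$ is satisfiable; with the membership argument this gives NP-completeness of (a).

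For (b) the roles are reversed: I let $\mathcal{P}(x)$ abort whenever some $x_i\notin\{0,1\}$ and otherwise honestly compute the arithmetization, emitting the trace $(x,z,\phi(x))$, while $\mathcal{C}$ is the conjunction of those same honest constraints \emph{together with} the extra conjunct $y=0$. The honest constraints pin $y$ to the computed SAT value, so $\mathcal{C}$ accepts a trace iff $\phi(x)=0$; hence $\mathcal{T}(\mathcal{P})\setminus\mathcal{S}(\mathcal{C})$ collects exactly the Boolean $x$ with $\phi(x)=1$ and is nonempty iff $\phi$ is satisfiable, giving NP-completeness of (b). The same construction yields (c): one checks it is \emph{never} under-constrained, since $\Pi_{xy}(\mathcal{S}(\mathcal{C}))=\{(x,0)\mid x\text{ Boolean},\ \phi(x)=0\}\subseteq\Pi_{xy}(\mathcal{T}(\mathcal{P}))$, so ``buggy'' collapses to ``over-constrained,'' and $\langle\mathcal{P},\mathcal{C}\rangle$ is well-constrained iff $\phi$ is unsatisfiable. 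This reduces UNSAT to \model and, with the co-NP upper bound, establishes co-NP-completeness.

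The main obstacle I anticipate is not the complexity bookkeeping but respecting that constraint systems are purely \emph{conjunctive}: they carve out an intersection of varieties and cannot directly express the negation ``$x$ does not satisfy $\phi$.'' This forces the two directions to be engineered asymmetrically—under-constrainedness is witnessed by encoding satisfaction \emph{positively} (a SAT witness enters $\mathcal{S}(\mathcal{C})$ but not $\mathcal{T}(\mathcal{P})$), whereas over-constrainedness is witnessed by adding a single conjunct $y=0$ that pushes honest satisfying traces \emph{out} of $\mathcal{S}(\mathcal{C})$. A secondary care point is the finite-field setting: because inputs range over all of $\mathbb{F}_q$ rather than $\{0,1\}$, I must pair booleanity constraints in $\mathcal{C}$ with matching program aborts in $\mathcal{P}$ so that non-Boolean inputs produce no spurious members of $\mathcal{T}(\mathcal{P})\setminus\mathcal{S}(\mathcal{C})$ or of $\Pi_{xy}(\mathcal{S}(\mathcal{C}))$, which would otherwise report a bug independent of $\phi$. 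Verifying these ``no spurious bug of the other type'' invariants—especially that the (b)/(c) construction is never under-constrained—is the step I would scrutinize most closely.
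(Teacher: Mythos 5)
Your proposal is correct, and it follows the same overall strategy as the paper (reductions from SAT for the hardness of (a) and (b), and from a co-NP-complete problem for (c)), but the details diverge in ways worth noting. First, you explicitly establish the NP / co-NP \emph{upper bounds} via polynomial-size certificates, relying on $\mathcal{P}$ being deterministic and polynomial-time evaluable; the paper's proof in Appendix~A only carries out the hardness reductions and never addresses membership, so your treatment is strictly more complete on this point. Second, your reductions are non-degenerate where the paper's are degenerate: for (a) the paper takes $\mathcal{P}$ to always abort so that $\mathcal{T}(\mathcal{P})=\emptyset$ and sets $\mathcal{C}=\phi$, whereas you take a constant-output $\mathcal{P}$ and place $\mathcal{S}(\mathcal{C})$ in a disjoint output slice ($y=1$ versus $y=0$) --- both work for the same reason. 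For (b) the paper simply sets $\mathcal{C}\equiv\mathrm{false}$ so that $\mathcal{S}(\mathcal{C})=\emptyset$, while you keep the honest arithmetization and add the conjunct $y=0$; your version has the advantage of staying within a purely conjunctive, R1CS-style constraint language, a point you rightly flag and the paper glosses over by treating $\mathcal{C}$ as an arbitrary Boolean-valued function. Third, for (c) the paper reduces from the Boolean tautology problem by way of Boolean equivalence (encoding $\phi_1$ in $\mathcal{P}$ and $\phi_2$ in a case-defined $\mathcal{C}$), whereas you reuse your (b) construction and verify it is never under-constrained, so that well-constrainedness collapses to unsatisfiability of $\phi$; this is a direct reduction from UNSAT and avoids the two-formula machinery, and your verification that $\Pi_{xy}(\mathcal{S}(\mathcal{C}))\subseteq\Pi_{xy}(\mathcal{T}(\mathcal{P}))$ in that construction is the key invariant and is argued correctly. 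In short, your route is sound, somewhat more economical for (c), and more faithful to the conjunctive constraint model; the paper's route is shorter for (a) and (b) at the cost of using degenerate instances and leaving the membership halves of all three completeness claims implicit.
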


\noindent Proof in Appendix~\ref{appendix:proof} is based on the reduction from the Boolean Satisfiability problem and the tautology problem.

\noindent{\textbf{Comparison to Prior Definitions.}} Tab.~\ref{tab:model} compares \model with prior definitions. For under-constrained bugs, prior work detect only non-deterministic behaviors~\cite{pailoor2023automated,jiang2025conscs}, and fail to capture those caused by computation aborts. {In addition, they yield false positives when the expected behavior of the witness generation program is non-deterministic, while our TCCT can properly handle those cases (see Appendix~\ref{appendix:non-deterministic-witness} for a detailed discussion).} For over-constrained bugs, prior work considers only a special case where the constraint satisfaction set is empty~\cite{chen2024ac4}. In contrast, \model is more general and complete, capturing all bugs detectable by existing models, as well as additional bugs that prior definitions overlook. A key advantage of \model is its DSL-independence: even though different DSLs may offer varied primitives or syntactical sugar for specifying constraints and computations in ZK circuits, \model’s definitions consistently apply across all of them. 

\begin{table}[!th]
\caption{Comparison of formal definitions of ZK vulnerabilities. Our \model is more general than existing models.}
\label{tab:model}
\begin{tabular}{@{}l|c|c@{}}
\toprule
Model         & Under-Constrained & Over-Constrained \\ \midrule
\begin{tabular}[l]{@{}l@{}} \cite{pailoor2023automated,jiang2025conscs} \end{tabular} & \begin{tabular}[l]{@{}l@{}}$\exists{x, z, z', y, y'}.$ \\ $ y \neq y' \land $ $ C(x, z, y) \land  C(x, z', y')$\end{tabular}                 &  - \\ \midrule
\cite{chen2024ac4}           &  Same as \cite{pailoor2023automated,jiang2025conscs}                  & $\mathcal{S}(\mathcal{C}) = \emptyset$ \\ \midrule
\textbf{\model}   & $\Pi_{xy}(\mathcal{S}(\mathcal{C})) \setminus \Pi_{xy}(\mathcal{T}(\mathcal{P})) \neq \emptyset$ & $\mathcal{T}(\mathcal{P}) \setminus \mathcal{S}(\mathcal{C}) \neq \emptyset    $  \\ \bottomrule
\end{tabular}
\end{table}

\subsection{Vulnerable and Benign Examples}
\label{subseq:examples}

This subsection shows two real-world Circom circuits, an under-constrained circuit that illustrates our definition, particularly the role of computation aborts; and a well-constrained circuit that highlight the importance of acounting for intermediate values.

\iftrue
\begin{figure}[!hbt]
\begin{minipage}[t]{0.465\linewidth}
\begin{lstlisting}[language=Circom, style=circomstyle, caption={Under-constrained bug from~\cite{tokamakzvm} undetectable by any existing formulations, only captured by our \model.}, label={lst:gen}]
template Transfer(){
 signal input fb, tb;
 signal input amt;
 signal output fn, tn
 assert(fb - amt >= 0);
 fn <== fb - amt;
 tn <== tb + amt;
}
\end{lstlisting}
\end{minipage}
\hfill
\begin{minipage}[t]{0.478\linewidth}
\begin{lstlisting}[language={Circom}, style=circomstyle, caption={Benign circuit from~\cite{circomlib} incorrectly flagged by prior static tools. Its trace and constraint satisfaction sets are shown in Tab.~\ref{tab:trace-cs-sets-well-iszero}.}, label={lst:Example}]
template IsZero() {
 signal input x;
 signal output y;
 signal z;
 z <-- x != 0 ? 1/x : 0;
 y <== -x*z + 1;
 x*y === 0;
}
\end{lstlisting}
\end{minipage}

\begin{minipage}[t]{0.44\linewidth}

\end{minipage}
\end{figure}
\fi

Code~\ref{lst:gen} shows a real-world under-constrained circuit (slightly modified for clarity) caught by \sys from \textit{zvm}~\cite{tokamakzvm}. it had previously gone undetected until our discovery, later confirmed by the developers. The circuit takes as inputs the sender’s balance \verb|fb|, the recipient’s balance \verb|tb|, and the transfer amount \verb|amt|, and produces updated balances \verb|fn| and \verb|tn|. This code tries to ensure that \verb|fb| is greater than \verb|amt| via \verb|assert|. (line 5). Circom's \verb|assert|, however, performs runtime checks only without adding any constraint. As a result, the computation aborts when \verb|fb| is higher than  \verb|amt|, the constraints remain satisfiable for any combination of \verb|fb| and \verb|amt|. No prior formal tools can detect this bug because this circuit is always deterministic. Existing static tools~\cite{wen2024practical,circomspect} also do not cover this misuse of \verb|assert|. To correct this circuit, one must introduce a genuine constraint enforcing \verb|fb > amt|—for example, by using the \verb|LessThan| component from \cite{circomlib}.  \S\ref{subseq:previously-unknown-bugs} also presents a similar bug involving the misuse of \verb|assert|, which is not detected by any existing tools.


Code~\ref{lst:Example} shows a benign circuit template from Circom’s de facto standard library, \textit{circomlib}~\cite{circomlib}, which returns one if the input $x$ is zero and zero otherwise. Line 6 introduces an intermediate value $z$ using a weak assignment, which does not add any constraint since it is not a quadratic expression. Line 7 assigns the output $y$ and introduces a constraint relating $x$, $z$, and $y$. Line 8 adds both a runtime assertion

\begin{wraptable}{r}{0.43\linewidth}
\caption{Trace and Constraint Satisfaction Sets of \texttt{IsZero} in Code~\ref{lst:Example} ($q = 3$)}
\label{tab:trace-cs-sets-well-iszero}
\begin{tabular}{@{}cccccc@{}}
\toprule
\multicolumn{3}{c}{$\mathcal{T}(\mathcal{P})$}               & \multicolumn{3}{c}{$\mathcal{S}(\mathcal{C})$} \\ \midrule
x & z & \multicolumn{1}{c|}{y} & x    & z   & y   \\ \midrule
0  & 0  & \multicolumn{1}{c|}{1}   & 0     & 0     & 1     \\
-  & -  & \multicolumn{1}{c|}{-}   & 0     & 1     & 1     \\
-  & -  & \multicolumn{1}{c|}{-}   & 0     & 2     & 1     \\
1  & 1  & \multicolumn{1}{c|}{0}   & 1     & 1     & 0     \\
2  & 2  & \multicolumn{1}{c|}{0}   & 2     & 2     & 0     \\ \bottomrule
\end{tabular}
\end{wraptable}

\noindent and an equality constraint. The full execution trace and constraint satisfaction sets are shown in Tab.~\ref{tab:trace-cs-sets-well-iszero}. Notably, $\mathcal{T}(\mathcal{P})$ and $\mathcal{S}(\mathcal{C})$ are not identical: $\mathcal{S}(\mathcal{C})$ includes two additional tuples. However, \model correctly handles this case. After projecting out intermediates, the sets of $(x, y)$ pairs are identical, indicating no under-constrained bug. While $\mathcal{S}(\mathcal{C})$ includes three different intermediate values, $z = 0, 1, 2$, for the case where $x = 0$ and $y = 1$, it includes the valid tuple $(0, 0, 1)$ from $\mathcal{T}(\mathcal{P})$, so there is no over-constrained bug either. Despite being benign, this circuit is incorrectly flagged by prior static analyzers, which rely on pattern matching rather than a precise semantic definition of ZK bugs.

We refer the readers to Appendix~\ref{appendix:examples} for more examples, including under-constrained circuits caused by insecure use of the \texttt{LessThan} template, a bug that only \sys can detect, and over-constrained circuits.

\section{\sys: Design}
\label{sec:mutation-testing}

This section introduces our novel fuzzing framework, \sys, which detects bugs in ZK circuits by jointly optimizing program mutation and input generation, guided by target selectors. Fig.~\ref{fig:zkfuzz-workflow} provides an overview of the workflow. The design is DSL-agnostic, with an implementation for Circom described in \S~\ref{sec:implementation}.

\subsection{Joint Program Mutation \& Input Generation}

\sys uses both program mutation and input fuzzing for detecting bugs. Although program mutation is widely used in mutation testing to assess test case effectiveness~\cite{papadakis2019mutation}, our objective differs: we leverage mutation specifically to generate alternative traces that continue to satisfy the constraint, exposing undesired behaviors in ZK circuits.

Algo.~\ref{algo-mutation-testing} describes the key steps that examine both $\mathcal{P}$ and its associated constraints $\mathcal{C}$ by using two testing modes:
\\  
\noindent\textbf{Detection of Over-Constrained Circuits.}  
We first generate input data and execute the computation $\mathcal{P}$. If it completes without crashing, yet the resulting trace fails to satisfy the constraints $\mathcal{C}$, this circuit is \textit{over-constrained}.
    
\noindent\textbf{Detection of Under-Constrained Circuits.}
To uncover under-constrained circuits, we construct a mutated version of the computation program, denoted by $\mathcal{P}'$. For example, the mutation algorithm can randomly substitute one operator for another. If $\mathcal{P}'$ produces a non-crash trace whose output differs from $\mathcal{P}$ while still meeting the circuit constraint, then the constraint is insufficient to enforce correctness—indicating an \textit{under-constrained} circuit. Note that if the original computation $\mathcal{P}$ crashes on the input while the mutated computation $\mathcal{P}'$ successfully generates an execution trace that satisfies the constraint, the original output $y$ is $\bot$. Since $\bot \neq y'$, this case is naturally captured by \sys.

\begin{algorithm}[!ht]
\caption{Core Logic of \sys}
\label{algo-mutation-testing}
\begin{algorithmic}[1]
\For{$i \leftarrow$ $1,2,\cdots\mathrm{MAX\_GENERATION}$}
\State Generate input data $ x $.
\State Mutate $ \mathcal{P} $ to $ \mathcal{P}' $
\State Execute both $ \mathcal{P} $ and $ \mathcal{P}' $ on $ x $:
\State \hspace{1.5em} $ \mathcal{P}(x) = (z, y) $, $\mathcal{P}'(x) = (z', y') $.
\If{$ y \neq \bot $ \textbf{and} $ \mathcal{C}(x, z, y) = \text{false} $}
    \State Report "Over-Constrained Problem."
\EndIf
\If{$y' \neq \bot$ \textbf{and} $ y \neq y' $ \textbf{and} $\mathcal{C}(x, z', y') = \text{true} $}
    \State Report "Under-Constrained Problem."
\EndIf
\EndFor
\end{algorithmic}
\end{algorithm}

For simplicity, Algo.~\ref{algo-mutation-testing} shows one input and one program mutant sampled per iteration. Both input sampling and program mutation can be biased toward targets that are more likely to expose vulnerabilities. The algorithm also requires minimal bookkeeping across iterations: once program mutants are generated, the current iteration operates independently of the previous ones.

\begin{figure*}[!th]
    \centering
    \includegraphics[width=0.99\linewidth]{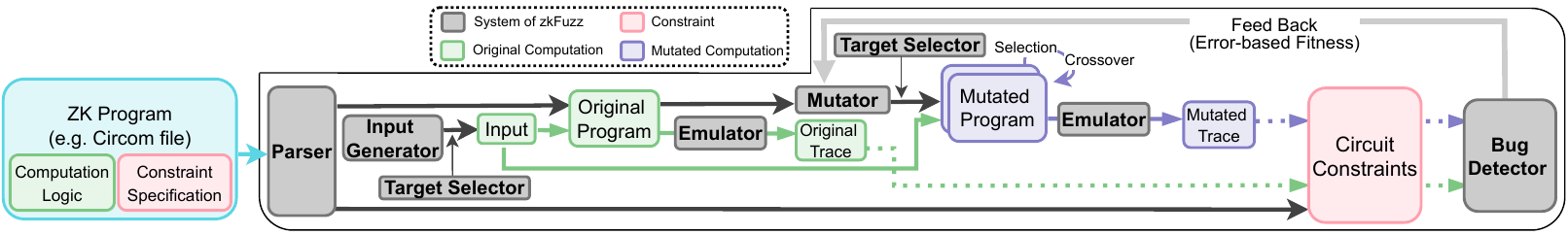}
    \caption{Basic workflow of \textbf{\sys}. This fuzzing framework systematically mutates program logic and feeds artificially generated input data to the original and mutated programs to catch inconsistencies between the program and the constraint. The error-based fitness function is utilized to steer the selection and crossover of mutants. Several target selectors are also applied for input generation and program mutation to guide the search.}
    \label{fig:zkfuzz-workflow}
\end{figure*}

\subsection{Guided Search}
\label{subsec:guided-search}

Since the search space of the program mutation and input generation is huge, we adopt a genetic algorithm with a novel fitness score and target selectors to achieve scalable bug detection.
\\
\noindent{\textbf{Genetic Algorithm with Error-based Fitness Score.}} Genetic algorithm (GA) is a search heuristic commonly used in fuzzing, where candidate solutions are mutated over successive generations, with higher-performing candidates prioritized to guide the search.

\sys uses the smoothed error function transformed from the constraints $\mathcal{C}$ as the fitness score to find an execution trace that satisfies the constraints while producing a different output from the original. This means that it guides the mutation towards under-constrained circuits, and we adopt this design since under-constrained circuits tend to cause more critical outcomes than over-constrained circuits~\cite{chen2024ac4}.

The constraints $\mathcal{C}$ usually consist of a set of polynomial equalities, $\bigwedge_{i=1}^{|\mathcal{C}|} (a_i == b_i)$, where $a_i$ and $b_i$ are quadratic expressions, and $|\mathcal{C}|$ denotes the number of those equations. Following~\cite{chen2022jigsaw}, we convert each equality into an error term $|a_i - b_i|$, which evaluates to 0 when the constraint is satisfied and a positive value when violated. The total error for a given trace is defined as the sum of these errors if the trace produces a different output from the original. Otherwise, the error is set to $\infty$, reflecting the failure to find a diverging execution. The final fitness score of a mutated program is determined by evaluating its error over multiple inputs and selecting the minimum score. Our \textit{min-sum} method differs from prior max-only method~\cite{chen2022jigsaw} because \sys jointly mutates both programs and inputs.

In each iteration, \sys uses roulette-wheel selection~\cite{razali2011genetic}, a widely adopted method in genetic algorithms, to generate new program mutants or perform crossover between mutants from the previous iteration. The selection is biased toward mutants with lower scores, as lower values indicate higher fitness. \sys represents a program mutant as a map from candidate locations to the actual mutations applied, so crossover is implemented as a straightforward merging of the maps from the two parent mutants.

\sys can also optionally guide input generation via the above error-based fitness score, where the final fitness score of an input is defined as the minimum score over multiple program mutations. 
\\
\noindent{\textbf{Target Selectors.}} We further improve the search of \sys by incorporating target selectors, which guide both input generation and program mutation. Unlike traditional targeted fuzzers—which primarily detect issues like memory access violations by finding inputs that reach specific code blocks~\cite{weissberg2024sok}—our target selectors are designed to maximize the odds of uncovering discrepancies between the computation program $\mathcal{P}$ and the circuit constraints $\mathcal{C}$.

The default target selector of \sys is \textit{skewed distribution}, which generates input data randomly while biasing the distribution to regions more likely to contain bugs. Specifically, empirical analysis reveals that many bugs stem from edge-case inputs, particularly values such as $0$ and the prime modulus $q$. To effectively target these cases, \sys samples inputs and substitutes constants in unconstrained assignments using a skewed distribution that increases the likelihood of uncovering hidden vulnerabilities. Specifically, values are sampled from predefined ranges with the following probabilities: binary values ($0$ and $1$) with 15\%, small positive integers ($2$ to $10$) with 34\%, large values near the field order ($q - 100$ to $q-1$) with 50\%, and all other values with 1\%. By biasing input selection toward these critical edge cases, our approach enhances the detection of subtle bugs that might otherwise go unnoticed. The impact of the choice of skewed distribution can be found in \S~\ref{subsec:ablation}. 

In addition, our \sys optionally leverages static analysis to mutate computation programs and generate inputs more likely to trigger unexpected behaviors. For example, one of the main root causes of real-world under-constrained circuits is the \textit{zero-division} pattern, wherein the division's numerator and denominator may be zero~\cite{wen2024practical}. In this scenario, a naive constraint for computing $y = x_1/x_2$ might be written as $y \cdot x_2 = x_1$. However, if both $x_1$ and $x_2$ are zero, any value of $y$ will satisfy this constraint, leaving the circuit under-constrained. Therefore, when this pattern is detected, increasing the chance that generated values for $x_1$ and $x_2$ are zero can help reveal these vulnerabilities. Details of these static-analysis-based target selectors are provided in \S~\ref{subsec:heuristic-static-analyzers}.

\section{\sys: Implementation}
\label{sec:implementation}

We implemented \sys for Circom in about 4,500 lines of Rust. The core implementation follows Algo.~\ref{algo-mutation-testing} with minimal heuristics. On top of this, we develop \sysp, which extends the core system with static analysis-based target selectors to guide the mutation search more efficiently.

\subsection{\sysb: Core Implementation}

\sys begins with an enhanced parser that decomposes Circom programs into an Abstract Syntax Tree (AST) for the computation $\mathcal{P}$ and corresponding circuit constraints $\mathcal{C}$. A mutator generates program variants ($\mathcal{P}'$), and an emulator produces execution traces for these variants using artificial inputs. A bug detector checks if these traces satisfy the original circuit constraints based on \model. Target selectors strategically guide input generation and program mutation towards likely bugs. Finally, mutants' fitness scores drive selection and crossover in subsequent generations (Fig.~\ref{fig:zkfuzz-workflow}). 

\noindent{\textbf{Circom Program Mutation}.} \sys mutates Circom programs using the following methods.

(a) \textit{Assertion Removal}:  
We begin by removing all \verb|assert| statements from $\mathcal{P}$ to reduce the likelihood of abnormal termination in the mutated program $\mathcal{P}'$.

(b) \textit{Assignment Transformations}:  
We randomly alter the right-hand side of assignment operators. Based on the popular mutation strategies used in mutation testing~\cite{papadakis2019mutation}, we apply two types of mutations:
1) By default, we replace the right-hand expression with a random value over the finite field sampled from the skewed distribution (see \S~\ref{subsec:guided-search});
2) If the right-hand side contains an operator, we may also substitute it with a different operator with the same category, where we categorize the operators into arithmetic, bitwise, and logical operators.

We also implement two optional mutation methods, random value addition and expression deletion. We show detailed results in Appendix~\ref{appendix:optionalmutation}.

\noindent{\textbf{Guided Search}.} \sys guides the search by default using an error-based fitness function and a skewed distribution, as described in \S~\ref{subsec:guided-search}.

\subsection{\sysp: Optimization with Target Selection via Static Analyzers}
\label{subsec:heuristic-static-analyzers}

As a further optional optimization, \sysp leverages static analysis to identify promising targets, guiding both program mutation and input generation to navigate the vast search space of ZK circuits.

\begin{table}[!ht]
\caption{Summary of static target selectors}
\label{tab:summary-static-selector}
\begin{tabular}{@{}lll@{}}
\toprule
Target Selector               & Type                                                                             & Pattern                  \\ \midrule
Weak Assignment         & Prog, Mutation & \verb|<--| \\
White List              & Prog. Mutation                                                                & \verb|IsZero|, \verb|Num2Bits|         \\
Zero-Division           & Input Generation                                                                 & \verb|y <-- x1/x2|  \\ 
Invalid Array Subscript & Input Generation                                                                 & \verb|x[too_big_val]| \\
Hash-Check              & Input Generation                                                                 & \verb|h(x1) === x2|             \\
\bottomrule
\end{tabular}
\end{table}

\noindent{\textbf{Weak Assignment Mutation.}} \sysp focuses on the weak assignment (\verb|<--|) for the assignment transformation mutation, since \verb|<--| does not introduce constraints, and modifying its right-hand side does not affect the constraints. This optimization is based on the observation that mutating the strong assignment (\verb|<==|) is ineffective, as it will likely violate the associated condition.

\noindent{\textbf{White List.}} Existing research and public audit reports have formally proven the safety of certain commonly used circuits, even if they contain weak assignments. Since mutating these circuits would immediately lead to constraint violations, \sysp allows specifying a \textit{white list} of circuit templates to be skipped during the mutation phase, concentrating the mutation efforts on less well-understood or more error-prone components. In this study, we include the two most commonly used \textit{circomlib} circuits with weak assignments, \verb|IsZero| and \verb|Num2Bits|, in the white list.

\noindent{\textbf{Zero-Division Pattern.}} As shown in \S~\ref{subsec:guided-search}, the \textit{zero-division} pattern~\cite{wen2024practical} is a common source of under-constrained circuits where both the division's numerator and denominator can be zero. 

Code~\ref{lst:zd} illustrates a simple case of this issue: when both numerator and denominator are zero, the multiplication constraint (see line 7) is trivially satisfied for any value of \verb|y|, due to missing checks for zero-division.

To find this problem, we introduce a target selector that specifically seeks input combinations setting both the numerator and the denominator to zero. When a numerator or denominator reduces to a polynomial of degree at most two in an input variable, we solve the simplified quadratic equation over finite fields to generate suitable inputs. More concretely, our approach is as follows: 

(Degree 1) Consider the expression $x + a \equiv 0 \bmod{q}$, where $x$ denotes the input variable and $a$ is the remaining terms. The solution is then given by $x \equiv -a \bmod{q}$;

(Degree 2) Consider the quadratic expression $a x^2 + bx + c \equiv 0 \bmod{q}$. A solution for $x$ is given by $x \equiv \frac{-b + \sqrt{b^2 - 4ac}}{2a} \bmod{q}$, provided that a square root of $b^2 - 4ac$ exists modulo $q$. Its existence can be verified using Euler's criterion, and the Tonelli-Shanks algorithm~\cite{tonelli1891bemerkung,shanks1972five} is then employed to compute $\sqrt{b^2 - 4ac}$, yielding the solution. 

If coefficients, $a$, $b$, and $c$ in the above, involve additional input variables, their values are determined by substituting artificially generated values.

\noindent{\textbf{Hash-Check Pattern.}} A common pattern in ZK circuit is the \emph{hash-check} pattern, where a circuit verifies that the hash of some input data matches a provided hash value (see Code~\ref{lst:hash-check}). Naive fuzzing with program mutation may struggle to generate valid input-hash pairs, especially when cryptographic hash functions are involved. To overcome this limitation, we employ a heuristic: with a certain probability if an equality constraint (e.g., \verb|A == B|) is encountered where \verb|A| is an input signal that has not been assigned at this point, we update the assignment of \verb|A| to match the value of \verb|B|. This adjustment increases the likelihood that the input data will satisfy the hash-check constraint. A real-world example is shown in Appendix~\ref{appendix:examples}. 

\begin{figure}[!ht]
\begin{minipage}{0.46\linewidth}
    \begin{lstlisting}[language=Circom, style=circomstyle, caption={Example of Zero Division pattern. If both \texttt{x1} and \texttt{x2} are 0, any value for \texttt{y} can satisfy the constraint.}, label={lst:zd}]
template ZeroDiv() {
  signal input x1, x2;
  signal output y;
  y <-- x1 / x2;
  y * x2 === x1;
}
    \end{lstlisting}
    \end{minipage}
\hfill
\begin{minipage}{0.47\linewidth}
    \begin{lstlisting}[language=Circom, style=circomstyle, caption={Example of Hash Check pattern. Finding \texttt{x1} whose hash is \texttt{x} requires inversing the hash function.}, label={lst:hash-check}]
template HashCheck() {
  signal input x1, x2;
  component h = Hash();
  h.x <== x1;
  x2 === h.y;
}
    \end{lstlisting}
    \end{minipage}
\end{figure}

\noindent{\textbf{Invalid Array Subscript.}} Circom-generated programs may crash due to assert violations or out-of-range array indices. Although our mutation algorithm removes assertions, crashes can still occur from invalid array subscripts. To address this, we monitor for repeated out-of-range errors and, when detected, shrink the maximum value of generated input to the minimum value of array dimensions in that program. This constraint can reduce the chance of triggering such crashes, ensuring the fuzzing remains stable.

Collectively, these target selectors enable \sysp to navigate the enormous search space inherent in ZK circuits efficiently. By strategically guiding both program mutation and input generation, they significantly enhance our ability to detect critical vulnerabilities.

\subsection{Support for Other DSLs}

Although our current implementation primarily focuses on Circom, the core design of \sys as presented in Algo \ref{algo-mutation-testing} remains agnostic to any particular language, and we have developed an initial prototype for Noir \cite{aztec2024noir}, a next-generation ZK DSL gaining traction. The details of the implementation and the evaluation can be found in \S~\ref{subseq:noir-evaluation}.

\section{Evaluation}
\label{sec:evaluation}

This section presents a comprehensive evaluation designed to address the following research questions:

 \noindent \textbf{RQ1:} How effective is \sys at detecting vulnerabilities in real-world zero-knowledge (ZK) circuits?
 
 \noindent \textbf{RQ2:} How quickly can \sys detect bugs?
 
 \noindent \textbf{RQ3:} Can \sys uncover previously unknown bugs in production-grade ZK circuits?  
 
 \noindent \textbf{RQ4:} What is the individual contribution of each heuristic and static analyzer in \sysp to efficient bug detection?
 
 \noindent \textbf{RQ5:} How sensitive is \sys's performance?

\noindent \textbf{RQ6:} Is \sys applicable for other DSLs like Noir?
 

Appendix~\ref{appendix-additional-experiments} further considers the following questions:

 \noindent \textbf{RQ7:} Is there an alternative mutation strategy?

 \noindent \textbf{RQ8:} How effective are the other baselines (the default version of ZKAP and circomspect with recursive analysis)?

 \noindent \textbf{RQ9:} What are the real-world impacts of previously unknown bugs discovered by \sys?
 
 \noindent \textbf{RQ10:} How many over-constrained circuits occur when the \texttt{constraint\_assert\_disabled} flag is set?

\noindent{\textbf{Benchmark Datasets.}} We evaluate \sys on \nallcircuits real-world test suites written in Circom. Our benchmark extends the dataset from ZKAP~\cite{wen2024practical} by adding new test cases from 40 additional projects, chosen based on the number of GitHub stars, forks, dependent projects, and the impact on active production services. Following~\cite{pailoor2023automated}, we classify the circuits into four categories based on the number of clauses in the constraint, denoted as $|\mathcal{C}|$. The categories are defined as follows: \textit{Small} ($|\mathcal{C}| < 100$);
\textit{Medium} ($100 \leq |\mathcal{C}| < 1000$);
\textit{Large} ($1000 \leq |\mathcal{C}| < 10000$);
\textit{Very Large:} ($10000 \leq |\mathcal{C}|$). 

\noindent{\textbf{Configurations.}} We set a timeout of 2 hour and a maximum of 50000 generations for all our tests. We also set the number of program mutants per generation to 30, each paired with generated inputs of the same size. Program mutation and crossover operations are applied with probabilities of 0.3 and 0.5, respectively. By default, the right-hand side of a weak assignment is replaced with a random constant. However, if it consists of an operator expression, the operator is randomly substituted with a probability of 0.1. When a zero-division pattern is detected, we analytically solve for the variable and substitute it with the computed solution with a probability of 0.2. Those parameters are determined with reference to prior works on genetic algorithm mutation testing~\cite{shin2019empirical,hassanat2019choosing,lin2003adapting}, and we empirically demonstrate the robustness of \sys across different hyperparameter settings in \S~\ref{subseq:sensitivity}. We run \sys with five different random seeds. All experiments were run on an Intel(R) Xeon(R) CPU @ 2.20GHz and 31GB of RAM, running Ubuntu 22.04.3 LTS.

\noindent{\textbf{Baseline Approaches.}} We compare \sys against four state-of-the-art Circom bug detection tools: Circomspect~\cite{circomspect}, ZKAP~\cite{wen2024practical}, Picus~\cite{pailoor2023automated}, and ConsCS~\cite{jiang2025conscs}. We use two SMT solvers, Z3~\cite{de2008z3} and CVC5~\cite{barbosa2022cvc5} for Picus.  To ensure a fair comparison, we do not count warnings related to \verb|IsZero| and \verb|Num2Bits| as false positives, since \sys explicitly whitelists these templates. Additionally, ZKAP’s \textit{unconstrained signal (US)} detector flags any circuit containing unused signals as unsafe. Although the original ZKAP paper considers unused inputs problematic, our work does not regard them as bugs (see \S~\ref{sec:rw}). Moreover, as noted in~\cite{wen2024practical}, the primary source of ZKAP’s false positives is its \textit{unconstrained sub-circuit output (USCO)} check, which targets unused component outputs. Therefore, we exclude both US and USCO from our main results to improve ZKAP’s precision. Note that the original Circomspect does not analyze internally called templates recursively. 

\begin{table*}[!ht]
\caption{The number of unique bugs detected by each tool, categorized by circuit size (measured by the size of constraint). TP and FP denote true positive and false positive, respectively. Prec. stands for precision, which is computed as TP / (TP + FP). Max / Min refers to the highest and lowest number of detected bugs across five different random seeds. The magnitude of TP and Precision, and FP are highlighted in \colorbox{teal!50}{blue} and \colorbox{red!50}{red}, respectively. Values in the lower row of each size category, colored with \colorbox{gray!50}{gray}, indicate the results only on the ZKAP dataset~\cite{wen2024practical}. \sys outperforms all existing methods.}
\label{tab:main-result}
\centering
\begin{adjustbox}{width=\textwidth,center}
\begin{tabular}{lcc|ccc|ccc|ccc|ccc|ccc|ccc}
\toprule
\begin{tabular}[c]{@{}l@{}}Constraint \\ Size\end{tabular} & \#Bench & \#Bug & \multicolumn{3}{c|}{Circomspect~\cite{circomspect}} & \multicolumn{3}{c|}{ZKAP~\cite{wen2024practical}} & \multicolumn{3}{c|}{\begin{tabular}[c]{@{}c@{}}Picus~\cite{pailoor2023automated} \\ (Z3 / CVC5)\end{tabular}} & \multicolumn{3}{c|}{\begin{tabular}[c]{@{}c@{}}ConsCS~\cite{jiang2025conscs}\end{tabular}} & \multicolumn{3}{c|}{\begin{tabular}[c]{@{}c@{}}\textbf{\sysb} \\ (Max / Min)\end{tabular}} & \multicolumn{3}{c}{\begin{tabular}[c]{@{}c@{}}\textbf{\sysp} \\ (Max / Min)\end{tabular}}
\\  \cmidrule(l){4-6} \cmidrule(l){7-9} \cmidrule(l){10-12} \cmidrule(l){13-15} \cmidrule(l){16-18} \cmidrule(l){19-21} 
&   &   & TP & FP & Prec. & TP  & FP  & Prec. & TP & FP  & Prec. & TP & FP  & Prec. & TP & FP  & Prec. & TP & FP  & Prec. \\ \midrule
Small - All          & 181 & 58   & \cellcolortale{84}{49}  & \cellcolorred{32}{23} & \cellcolortale{68}{0.68}  & \cellcolortale{62}{36} & \cellcolorred{25}{12} & \cellcolortale{75}{0.75} & \cellcolortale{37}{22 / 23} & 0 & \cellcolortale{100}{\textbf{1.00}} & \cellcolortale{28}{14} & 0 & \cellcolortale{100}{\textbf{1.00}} & \cellcolortale{97}{57 / 57} & 0 & \cellcolortale{100}{\textbf{1.00}} & \cellcolortale{100}{\textbf{58 / 58}} & 0 & \cellcolortale{100}{\textbf{1.00}} \\
Small - ZKAP & \cellcolor{gray!50}{93} & \cellcolor{gray!50}{20}   & \cellcolor{gray!50}{15}  & \cellcolor{gray!50}{6} & \cellcolor{gray!50}{0.71}  & \cellcolor{gray!50}{15} & \cellcolor{gray!50}{2} & \cellcolor{gray!50}{0.88} & \cellcolor{gray!50}{9 / 9} & \cellcolor{gray!50}{0}\cellcolor{gray!50} & \cellcolor{gray!50}{1.00} & \cellcolor{gray!50}{6} & \cellcolor{gray!50}{0}\cellcolor{gray!50} & \cellcolor{gray!50}{1.00} & \cellcolor{gray!50}{19 / 19} & \cellcolor{gray!50}{0} & \cellcolor{gray!50}{1.00} & \cellcolor{gray!50}{20 / 20} & \cellcolor{gray!50}{0} & \cellcolor{gray!50}{1.00} \\
\hline
Medium - All         & 97 &  7 & \cellcolortale{42}{3}  & \cellcolorred{73}{8} & \cellcolortale{27}{0.27} &  \cellcolortale{14}{1}  & \cellcolorred{94}{15} & \cellcolortale{6}{0.06} & 0 / 0  & 0 & - & 0  & 0 & - & \cellcolortale{100}{\textbf{7 / 7}}  & 0 & \cellcolortale{100}{\textbf{1.00}} & \cellcolortale{100}{\textbf{7 / 7}}  & 0   & \cellcolortale{100}{\textbf{1.00}} \\
Medium - ZKAP         & \cellcolor{gray!50}{42} &  \cellcolor{gray!50}{0} & \cellcolor{gray!50}{0}  & \cellcolor{gray!50}{5} & \cellcolor{gray!50}{0.00} &  \cellcolor{gray!50}{0}  & \cellcolor{gray!50}{8} & \cellcolor{gray!50}{0.09} & \cellcolor{gray!50}{0 / 0}  & \cellcolor{gray!50}{0} & \cellcolor{gray!50}{-} & \cellcolor{gray!50}{0}  & \cellcolor{gray!50}{0} & \cellcolor{gray!50}{-} & \cellcolor{gray!50}{0 / 0}  & \cellcolor{gray!50}{0} & \cellcolor{gray!50}{1.00} & \cellcolor{gray!50}{0 / 0}  & \cellcolor{gray!50}{0}   & \cellcolor{gray!50}{-} \\
\hline
Large - All          & 86 & 17  & \cellcolortale{35}{6}  & \cellcolorred{73}{16} & \cellcolortale{27}{0.27}  & \cellcolortale{19}{4}  & \cellcolorred{57}{4} & \cellcolortale{60}{0.50}  & 0 / 0  & 0 & - & 0 & 0 & - & \cellcolortale{100}{12 / 12}  & 0 & \cellcolortale{100}{\textbf{1.00}}  & \cellcolortale{94}{\textbf{16 / 16}} & 0  & \cellcolortale{100}{\textbf{1.00}}  \\
Large - ZKAP          & \cellcolor{gray!50}{50} & \cellcolor{gray!50}{2}  & \cellcolor{gray!50}{2}  & \cellcolor{gray!50}{ 12} & \cellcolor{gray!50}{ 0.14}  & \cellcolor{gray!50}{0}  & \cellcolor{gray!50}{4} & \cellcolor{gray!50}{0.00}  & \cellcolor{gray!50}{0 / 0}  & \cellcolor{gray!50}{0} & \cellcolor{gray!50}{-} & \cellcolor{gray!50}{0} & \cellcolor{gray!50}{0} & \cellcolor{gray!50}{-} & \cellcolor{gray!50}{0 / 0}  & \cellcolor{gray!50}{0} & \cellcolor{gray!50}{1.00}  & \cellcolor{gray!50}{1 / 1} & \cellcolor{gray!50}{0}  & \cellcolor{gray!50}{1.00}  \\
\hline
Very Large - All     &  88 & 6 & \cellcolortale{83}{\textbf{5}}  & \cellcolorred{84}{27} & \cellcolortale{16}{0.16} &  \cellcolortale{33}{2}  & \cellcolorred{83}{10} & \cellcolortale{17}{0.17}  & 0 / 0  & 0 & - & 0  & 0 & - & \cellcolortale{33}{2 / 2}  & 0 & \cellcolortale{100}{\textbf{1.00}} & \cellcolortale{66}{4 / 3}  & 0 & \cellcolortale{100}{\textbf{1.00}}              \\
Very Large - ZKAP     &  \cellcolor{gray!50}{73} & \cellcolor{gray!50}{5} & \cellcolor{gray!50}{4}  & \cellcolor{gray!50}{15} & \cellcolor{gray!50}{0.21} &  \cellcolor{gray!50}{3}  & \cellcolor{gray!50}{0} & \cellcolor{gray!50}{1.00}  & \cellcolor{gray!50}{0 / 0}  & \cellcolor{gray!50}{0} & \cellcolor{gray!50}{-} & \cellcolor{gray!50}{0}  & \cellcolor{gray!50}{0} & \cellcolor{gray!50}{-} & \cellcolor{gray!50}{1 / 1}  & \cellcolor{gray!50}{0} & \cellcolor{gray!50}{1.00} & \cellcolor{gray!50}{2 / 2}  & \cellcolor{gray!50}{0} & \cellcolor{gray!50}{1.00}              \\
\hline \hline
Total - All          & 452 & 88 & \cellcolortale{75}{63} & \cellcolorred{54}{74} & \cellcolortale{46}{0.46} &  \cellcolortale{53}{43} & \cellcolorred{49}{41}  & \cellcolortale{51}{0.51} & \cellcolortale{28}{22 / 23} & 0 & \cellcolortale{100}{\textbf{1.00}} & \cellcolortale{20}{14} & 0 & \cellcolortale{100}{\textbf{1.00}} & \cellcolortale{86}{79 / 79} & 0  & \cellcolortale{100}{\textbf{1.00}} & \cellcolortale{95}{\textbf{85 / 84}} & 0 & \cellcolortale{100}{\textbf{1.00}} \\
Total - ZKAP          & \cellcolor{gray!50}{258} & \cellcolor{gray!50}{27} & \cellcolor{gray!50}{21} & \cellcolor{gray!50}{38} & \cellcolor{gray!50}{0.35} &  \cellcolor{gray!50}{18} & \cellcolor{gray!50}{14}  & \cellcolor{gray!50}{0.56} & \cellcolor{gray!50}{9 / 9} & \cellcolor{gray!50}{0} & \cellcolor{gray!50}{1.00} & \cellcolor{gray!50}{6} & \cellcolor{gray!50}{0} & \cellcolor{gray!50}{1.00} & \cellcolor{gray!50}{20 / 20} & \cellcolor{gray!50}{0}  & \cellcolor{gray!50}{1.00} & \cellcolor{gray!50}{23 / 23} & \cellcolor{gray!50}{0} & \cellcolor{gray!50}{1.00} 
\\ \bottomrule
\end{tabular}
\end{adjustbox}
\end{table*}

\ifcomment
\begin{table*}[!ht]
\caption{The number of unique bugs detected by each tool, categorized by circuit size (measured by the size of constraint). TP and FP denote true positive and false positive, respectively. Prec. stands for precision, which is computed as TP / (TP + FP). Average refers to the average number of detected bugs across five different random seeds. Values in parentheses indicate the results on the ZKAP dataset. The magnitude of TP and Precision, and FP are highlighted in \colorbox{teal!50}{blue} and \colorbox{red!50}{red}, respectively. \textbf{\sys} outperforms all existing methods.}
\label{tab:main-result}
\centering
\begin{adjustbox}{width=\textwidth,center}
\begin{tabular}{lcc|ccc|ccc|ccc|ccc|ccc|ccc}
\toprule
\begin{tabular}[c]{@{}l@{}}Constraint \\ Size\end{tabular} & \#Bench & \#Bug & \multicolumn{3}{c|}{Circomspect~\cite{circomspect}} & \multicolumn{3}{c|}{ZKAP~\cite{wen2024practical}} & \multicolumn{3}{c|}{\begin{tabular}[c]{@{}c@{}}Picus~\cite{pailoor2023automated} \\ (Z3 / CVC5)\end{tabular}} & \multicolumn{3}{c|}{\begin{tabular}[c]{@{}c@{}}ConsCS~\cite{jiang2025conscs}\end{tabular}} & \multicolumn{3}{c|}{\begin{tabular}[c]{@{}c@{}}\textbf{\sys} \\ (Average)\end{tabular}} & \multicolumn{3}{c}{\begin{tabular}[c]{@{}c@{}}\textbf{\sysp} \\ (Average)\end{tabular}}
\\  \cmidrule(l){4-6} \cmidrule(l){7-9} \cmidrule(l){10-12} \cmidrule(l){13-15} \cmidrule(l){16-18} \cmidrule(l){19-21} 
&   &   & TP & FP & Prec. & TP  & FP  & Prec. & TP & FP  & Prec. & TP & FP  & Prec. & TP & FP  & Prec. & TP & FP  & Prec. \\ \midrule
Small          & 181 (93) & 58   & \cellcolortale{90}{49 (15)}  & \cellcolorred{30}{23 (6)} & \cellcolortale{70}{0.68 (0.71)}  & \cellcolortale{80}{36 (15)} & \cellcolorred{25}{12 (2)} & \cellcolortale{78}{0.75 (0.88)} & \cellcolortale{46}{19/20 (9/9)} & \cellcolor{gray!50}{0} & \cellcolortale{100}{\textbf{1.00}} & \cellcolortale{35}{14 (6)} & \cellcolor{gray!50}{0} & \cellcolortale{100}{\textbf{1.00}} & \cellcolortale{97}{57 (19)} & \cellcolor{gray!50}{0} & \cellcolortale{100}{\textbf{1.00}} & \cellcolortale{100}{\textbf{58 (20)}} & \cellcolor{gray!50}{0} & \cellcolortale{100}{\textbf{1.00}} 
\\ \hline
Medium         & 97 (42) &  8 & \cellcolortale{50}{4 (1)}  & \cellcolorred{66}{18 (9)} & \cellcolortale{33}{0.18 (0.10)} &  \cellcolortale{6}{2 (1)}  & \cellcolorred{100}{20 (10)} & \cellcolortale{0}{0.09 (0.09)} & 0/0 (0/0)  & \cellcolor{gray!50}{0} & \cellcolor{gray!50}{-} & 0 (0)  & \cellcolor{gray!50}{0} & \cellcolor{gray!50}{-} & \cellcolortale{100}{\textbf{7 (0)}}  & \cellcolor{gray!50}{0} & \cellcolortale{100}{\textbf{1.00}} & \cellcolortale{100}{\textbf{7 (0)}}  & \cellcolor{gray!50}{0}   & \cellcolortale{100}{\textbf{1.00}}             
\\
\hline
Large          & 86 (50) & 17  & \cellcolortale{38}{6 (2)}  & \cellcolorred{66}{16 (12)} & \cellcolortale{33}{0.27 (0.14)}  & \cellcolortale{19}{4 (1)}  & \cellcolorred{57}{3 (2)} & \cellcolortale{60}{0.57 (0.50)}  & 0/0 (0/0)  & \cellcolor{gray!50}{0} & \cellcolor{gray!50}{-} & 0 (0) & \cellcolor{gray!50}{0} & \cellcolor{gray!50}{-} & \cellcolortale{100}{12 (0)}  & \cellcolor{gray!50}{0} & \cellcolortale{100}{\textbf{1.00}}  & \cellcolortale{94}{\textbf{16 (1)}} & \cellcolor{gray!50}{0}  & \cellcolortale{100}{\textbf{1.00}}            
\\
\hline
Very Large     &  88 (73) & 5 & \cellcolortale{83}{\textbf{4} (3)}  & \cellcolorred{79}{17 (11)} & \cellcolortale{22}{0.19 (0.21)} &  \cellcolortale{33}{1 (1)}  & \cellcolorred{50}{6 (0)} & \cellcolortale{50}{0.14 (1.00)}  & 0/0 (0/0)  & \cellcolor{gray!50}{0} & \cellcolor{gray!50}{-} & 0 (0)  & \cellcolor{gray!50}{0} & \cellcolor{gray!50}{-} & \cellcolortale{33}{2 (1)}  & \cellcolor{gray!50}{0} & \cellcolortale{100}{\textbf{1.00}} & \cellcolortale{66}{3.4 (2)}  & \cellcolor{gray!50}{0} & \cellcolortale{100}{\textbf{1.00}}              \\
\hline \hline
Total          & 452 (258) & 88 & \cellcolortale{75}{63 (21)} & \cellcolorred{51}{74 (38)} & \cellcolortale{50}{0.45 (0.35)} &  \cellcolortale{53}{43 (18)} & \cellcolorred{40}{41 (14)}  & \cellcolortale{53}{0.51 (0.56)} & \cellcolortale{28}{19/20 (9/9)} & \cellcolor{gray!50}{0} & \cellcolortale{100}{\textbf{1.00}} & \cellcolortale{20}{14 (6)} & \cellcolor{gray!50}{0} & \cellcolortale{100}{\textbf{1.00}} & \cellcolortale{86}{75 (20)} & \cellcolor{gray!50}{0}  & \cellcolortale{100}{\textbf{1.00}} & \cellcolortale{95}{\textbf{84.4 (23)}} & \cellcolor{gray!50}{0} & \cellcolortale{100}{\textbf{1.00}} 
\\ \bottomrule
\end{tabular}
\end{adjustbox}
\end{table*}
\fi

\subsection{RQ1: Effectiveness of \sys} Tab.~\ref{tab:main-result} summarizes the total number of unique bugs detected by each tool across different benchmark categories. We manually check all potential bugs flagged by at least one tool and categorize them into TP (True Positive), which refers to cases where the tool correctly flags ZK circuits as unsafe, and FP (False Positive), which refers to cases where the tool incorrectly flags a test suite as unsafe. Precision is calculated as TP / (TP + FP). The lower gray row in each category corresponds to the results on the ZKAP dataset only~\cite{wen2024practical}. Since there may be undetected bugs not flagged by any tool, we do not report true or false negatives. The \#Bugs column shows the number of unique bugs detected by at least one tool. A bug in a Circom template (\S~\ref{subsec:circom-lang}) may affect all programs that instantiate the template, but we count it as a single bug to avoid double-counting. Moreover, when a vulnerability is detected in multiple categories (for example, if a bug is identified in both a small and a medium circuit because they use the same buggy template), it is counted only in the smallest applicable category to prevent double-counting. Note this way of counting favors existing formal tools due to their limited scalability. We report the maximum and minimum TP of \sys with five different random seeds. Our evaluation shows that \sysp significantly and stably outperforms existing methods, detecting about 96\% (\ndetectedbugs/\nallbugs) of all bugs without any false positives. 

In addition, \sysb, which performs the search without static analysis, still detects more bugs than other existing tools, demonstrating the strength of our core approach. \sysp detects 10\% more bugs overall compared to \sysb, especially in larger circuits.

One clear advantage of \sys over static analyzers like Circomspect and ZKAP is its ability to avoid false positives. This makes \sys more practical as a testing tool, as developers are often inclined to disregard alerts from tools with high false positive rates~\cite{engler2008couple,johnson2013don}.

Our benchmark shows that the performance of ZKAP is worse than that reported in~\cite{wen2024practical}. This stems from not classifying potential logic issues, such as unused inputs, as vulnerabilities. Furthermore, the official implementation of ZKAP is known to crash on certain templates written in recent Circom versions—an issue acknowledged by ZKAP's developers but yet to be fixed. The results of ZKAP with \textit{US} and \textit{USCO} detectors and our modified version of Circomspect with recursive analysis are provided in Appendix~\ref{appendix-additional-experiments}.

Our analysis also reveals that 55.2\% of root causes stem from computation abort bugs, while 44.8\% are due to non-deterministic behavior. This result highlights the importance of comprehensively addressing both types of bugs, whereas Picus and ConsCS focus solely on non-deterministic cases. 

ConsCS detects fewer bugs overall compared to Picus. Although \cite{jiang2025conscs} reports a higher success rate for ConsCS, it is important to note that their evaluation measures both safe and unsafe circuits (demonstrating either security or vulnerability), while we focus solely on vulnerability detection and do not include formal verification of circuit safety.

\noindent
\rqbox{\textbf{Result 1: The effectiveness of \sys}}{Both \sysb and \sysp outperform existing tools, efficiently detecting \ndetectedbugs of \nallbugs vulnerabilities in the benchmark of real-world ZK circuits without false positives, demonstrating its reliable ability to identify critical security flaws in ZK circuits.
}

\subsection{RQ2: Detection Speed}

\begin{figure}
    \centering
    \includegraphics[width=\linewidth]{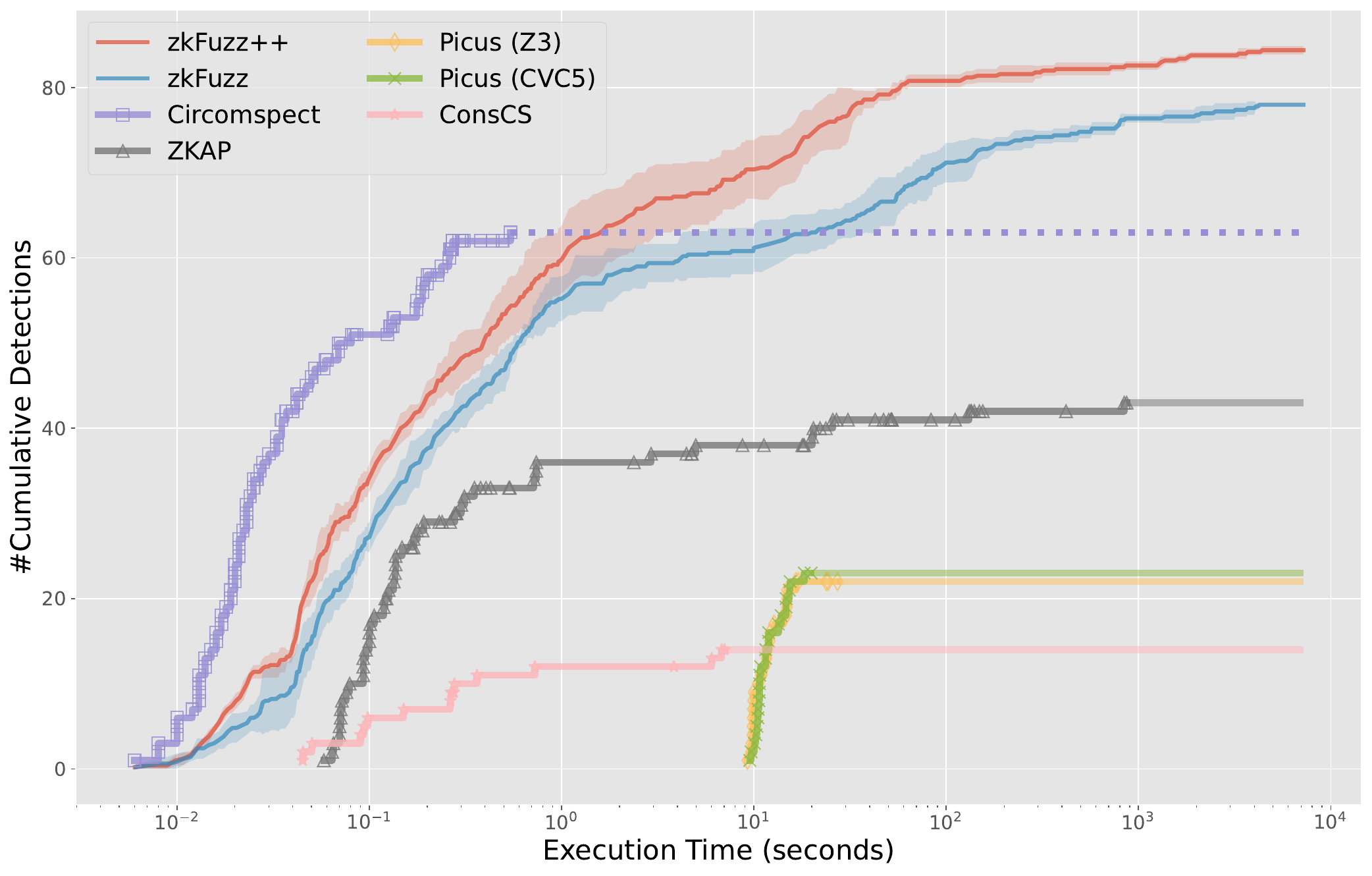}
    \caption{Detection Time Analysis. Both modes of \sys are superior to other methods.}
    \label{fig:execution-time}
\end{figure}

Fig.~\ref{fig:execution-time} shows the cumulative number of unique vulnerabilities detected by each tool over time. For \sysb and \sysp, we depict the average and 1-sigma regions across five trials with different random seeds. Notably, \sysp identifies over 90\% of bugs within the first 100 seconds, although the naive \sysb takes more than one hour to achieve the same number of detected bugs, showing the effectiveness of our target selectors. Although ZKAP is a static analyzer, it suffers from higher performance overhead than \sys due to the overhead of compiling Circom files to LLVM and analyzing its dependency graph. Similarly, Picus exhibits slower performance, relying on expensive queries to SMT solvers. Our results show that ConsCS finds bugs faster than Picus, confirming the findings of \cite{jiang2025conscs}. The dashed line of Circomspect shows that it finishes the analysis of all test cases before the timeout.

\noindent
\rqbox{\textbf{Result 2: Detection Speed of {\sys}}}{
\sysp identifies 90\% of bugs within just 100 seconds, demonstrating its practical effectiveness over traditional formal methods. The comparison with \sys highlights the advantage of target selectors, achieving a speedup of more than 30 times.
}

\subsection{RQ3: Previously Unknown Bugs}
\label{subseq:previously-unknown-bugs}

Out of \ndetectedbugs bugs identified by \sysp, \nnewbugs were previously unknown. Of these, \nconfirmed were confirmed by developers, and \nfixed have already been fixed. One example is shown in Code~\ref{lst:dataencoder}, found in the \textit{passport-zk-circuits}~\cite{passport-zk-circuits}, a Web3 project that raised \$2.5 million. For instance, when \verb|day=4|, the expected assignment is \texttt{\{dayD:0, dayR:4\}} although \texttt{\{dayD:1, dayR:-6\}} also satisfies the constraint.

\begin{lstlisting}[language=Circom, style=circomstyle, caption={A previously unknown bug detected by \sys in \textit{passport-zk-circuits}~\cite{passport-zk-circuits}. This circuit is under-constrained.}, label={lst:dataencoder}]
template DateEncoder() {
 signal output encoded;
 signal input  day, month, year;
 signal dayD <-- (day \ 10);
 signal dayR <-- (day % 10);
 dayD * 10 + dayR === day;
 signal dayE <== (dayD * 2**8 + dayR) + 12336;
//Encode month and year in the same manner
 encoded<== yearE * 2**32 + monthE * 2**16 + dayE;
\end{lstlisting}

Another example in Code~\ref{lst:whitelist} detected in the \textit{zk\_whitelist}~\cite{zkwhitelist}. This circuit is expected to abnormally terminate when the two inputs are different by \verb|assert|, while this condition is not included in the constraints. This bug, confirmed by the developer and fixed, does not align with any prior formulations or patterns of ZK circuit vulnerabilities, showing the superiority of our TCCT. The same misuse of \verb|assert| is also found in two circuits of \textit{masa-zkSBT}, confirmed the developer.

\begin{lstlisting}[language=Circom, style=circomstyle, caption={A previously unknown bug found by \sys in \textit{zk\_whitelist}~\cite{zkwhitelist}. Any existing tools cannot capture this bug.}, label={lst:whitelist}]
template Whitelist() {
 signal input addressInDecimal, publicAddress;
 signal output c;
 component ise = IsEqual();
 ise.in[0] <== addressInDecimal;
 ise.in[1] <== publicAddress;
 assert(ise.out==1);
 c <== ise.out; }
\end{lstlisting}

Unlike static analyzers such as ZKAP and Circomspect, \sys can generate concrete counterexamples, significantly reducing the need for manual inspection. The bugs identified by \sys were patched within a median of just 5.0 days after disclosure, underscoring both their critical severity and the prompt response of the maintainers.

\noindent
\rqbox{\textbf{Result 3: New Bugs found by {\sys}}}{
{\sysp} uncovers \nnewbugs previously unknown bugs in real-world circuits, including ones missed by all existing tools due to incomplete formulations.
}

\subsection{RQ4: Ablation Study of Target Selectors}
\label{subsec:ablation}

Tab.~\ref{tab:ablation} shows the results of ablation studies where we assess the impact of removing individual target selectors: \textit{error-based fitness score}, \textit{skewed distribution}, \textit{white list}, \textit{invalid array subscript}, \textit{hash-check} and \textit{zero-division}. We report the average and standard deviation of the cumulative number of unique bugs found at each time checkpoint among five different random seeds. For skewed distribution, we use a less skewed distribution; binary values (0 and 1) with 15\%, small positive integers (2 to 100) with 34\%, larger values near $q$ ($q- 1000$ to $q - 1$) with 50\%, and all other values (11 to $q - 1001$) with 1\%. Fig.~\ref{fig:ablation-detail} also presents the relative execution time for discovering each unique bug, normalized against the default configuration.

Among these factors, the choice of skewed distribution has the largest initial impact. Our default \sysp (Ours) finds 10 more bugs within the first 100 seconds compared to the less skewed distribution. The relative execution time indicates that adopting a less skewed distribution or removing the whitelist slows down detection by over 100×. In contrast, removing checks for invalid array subscripts, hash integrity, or division by zero prevents the fuzzer from detecting several bugs. However, after a reasonable amount of time, the performance of the less skewed distribution also converges to a similar number of unique detected bugs, demonstrating the stability and robustness of \sys regardless of the specific choice of skewed distributions. Additionally, removing static analysis-based target selectors results in several bugs being missed, highlighting their importance in the system's overall effectiveness.

\begin{table*}[!th]
\centering
\caption{Impact of removing individual target selector on \sysp's performance. Each selector contributes significantly to the number of detected bugs and the detection speed, demonstrating their importance.}
\label{tab:ablation}
\begin{tabular}{@{}lcccccc@{}}
\toprule
Time (s) & \begin{tabular}[c]{@{}c@{}}Ours\\ (\sysp)\end{tabular} & Less Skewed Distribution & w.o. White List & w.o. Invalid Array Subscript & w.o. Hash Check & w.o. Zero Division \\ \midrule
10.0 & 71.40 ± 3.44 & 61.60 ± 1.62 & 66.40 ± 3.01 & 69.60 ± 4.08 & 68.40 ± 3.83 & 70.40 ± 3.56 \\
100.0 & 81.80 ± 0.75 & 68.80 ± 0.98 & 77.00 ± 2.19 & 80.00 ± 0.89 & 78.80 ± 0.75 & 80.80 ± 0.75 \\
1000.0 & 83.60 ± 0.49 & 74.80 ± 2.48 & 82.20 ± 0.75 & 82.00 ± 0.63 & 80.60 ± 0.49 & 82.60 ± 0.49 \\
7200.0 & 85.40 ± 0.49 & 84.00 ± 0.89 & 84.60 ± 0.49 & 83.40 ± 0.49 & 82.00 ± 0.00 & 84.60 ± 0.49 \\ \bottomrule
\end{tabular}
\end{table*}

\begin{figure}[!ht]
    \centering
    \includegraphics[width=\linewidth]{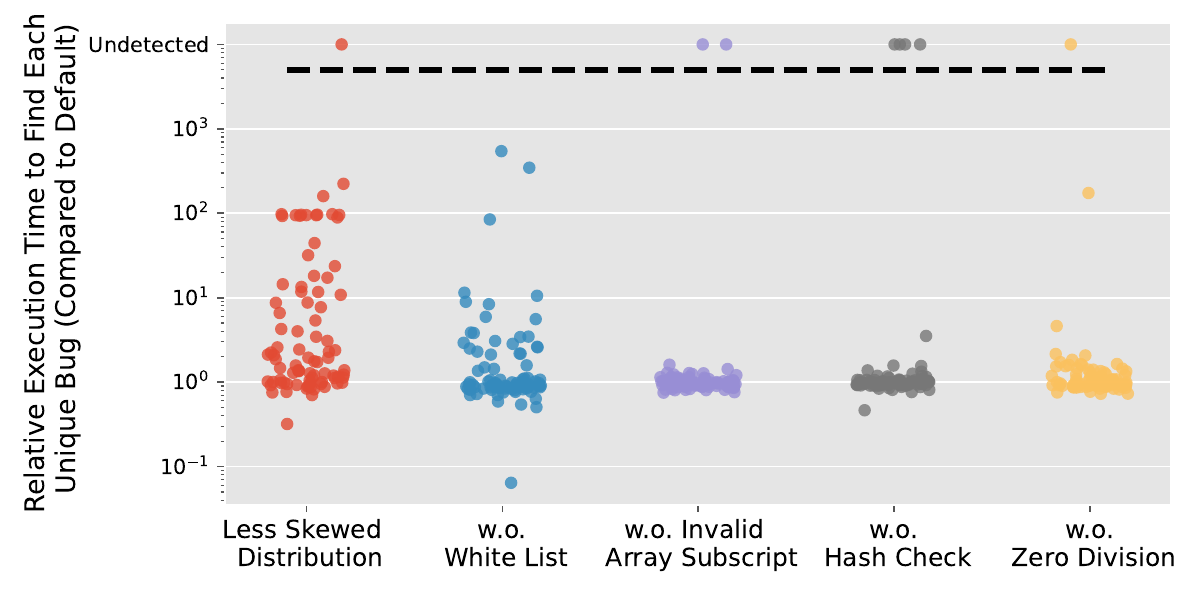}
    \caption{Relative execution time for discovering each unique bug compared to the default setting. Removing each heuristic degrades detection performance by over 100x.}
   \label{fig:ablation-detail}
\end{figure}

\noindent
\rqbox{\textbf{Result 4: Ablation Study of Target Selectors}}{
Each target selector in {\sysp} significantly enhances bug detection by increasing detection speed by more than 100 times and enabling the discovery of bugs that cannot be found without them.
}

\subsection{RQ5: Hyperparameter Sensitivity}
\label{subseq:sensitivity}

\begin{figure*}[!ht]
    \centering
    \includegraphics[width=\linewidth]{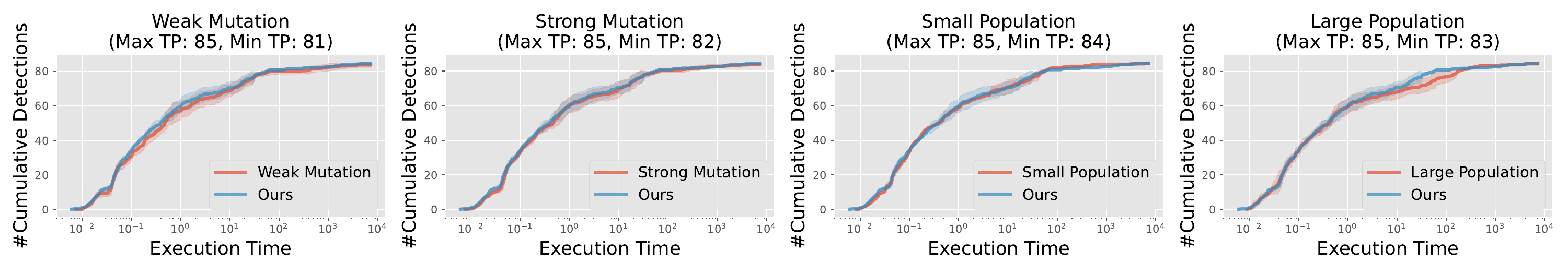}
    \caption{Impact of mutation strength and population size on \sysp. \sysp consistently maintains strong bug detection performance across various hyperparameter settings.}
    \label{fig:mutation-strength}
\end{figure*}

We further examine how mutation strength and population size affect performance. Fig.~\ref{fig:mutation-strength} compares configurations of \sysp using weak mutation (mutation and crossover probabilities of 0.1) versus strong mutation (both set to 0.7) and small population (the number of program mutants and generated inputs of 10) versus large population (both set to 50). Our results show that weak mutation and a large population yield slightly inferior performance compared to the default setting, showing the importance of sufficient exploration in navigating the vast search space since weak mutation slows the speed of exploration while larger populations cause the fuzzer to execute more programs before steering mutants toward promising directions, increasing overhead. Nonetheless, regardless of hyperparameter choices, cumulative detections converge to similar levels.

\noindent
\rqbox{\textbf{Result 5: Stability of \sys}}{
The bug detection performance of our fuzzer remains stable across different hyperparameter choices, identifying the same number of bugs in all settings.
}

\subsection{RQ6: Preliminary Prototype for Noir}
\label{subseq:noir-evaluation}

To demonstrate the language-agnostic capabilities of \sys, we target Noir as a representative modern ZK DSL, which is a next-generation zero-knowledge (ZK) DSL, gaining traction. While Noir can automatically generate constraints for witness computation, developers often write manual constraints to optimize circuit size—enabling more efficient ZK circuits but also introducing opportunities for subtle bugs.

Supporting Noir in \sys presents substantially more difficult challneges than Circom, as Noir’s rich, Rust-inspired syntax—with its advanced type system and user-defined abstractions—demands much more sophisticated parsing and emulation infrastructure, all within an ecosystem that is still in its early stages of development.

Therefore, our preliminary prototype of \sys for Noir leverages its intermediate representations (IRs). Specifically, the Noir compiler lowers the Rust‐like program into two IRs: ACIR, which corresponds to the arithmetic circuit constraints, and Brilling, which corresponds to the unconstrained witness computation logic. As our mutation strategy, we modify the source operand of the \verb|Mov| instructions in the Brilling opcodes to reference special memory regions; our custom memory structure then returns random values with the appropriate data type when those regions are read. In this way, we can mutate the witness‐calculation program without altering the underlying constraints. Although we have not yet implemented any target selectors for this Noir prototype, doing so is clearly feasible with a modest amount of engineering effort—an avenue we leave to future work.

Since there is no existing fuzzers for Noir targeting under or over-constrained circuit or dataset of vulnerable Noir circuits, we evaluate the preliminary implementation of \sys for Noir on 20 representative Noir circuits from 8 open-source libraries, chosen by ourselves based on the number of stars and forks on GitHub and compatibility with the latest version of Noir compiler, where \sys successfully rediscovers a known under-constrained bug~\cite{ecrecover-noir-pr16}.

\noindent
\rqbox{\textbf{Result 6: Language-Agnostic Design of \sys}}{
Our preliminary prototype for Noir (which no other tools support) successfully identified an under-constrained circuit, showing that \sys's potential to generalize beyond a single language.
}

\noindent{\textbf{Ethical Considerations.}} We responsibly disclosed all previously unknown vulnerabilities found in public projects. When possible, we contacted maintainers privately; otherwise, we opened GitHub issues. Most disclosures were made at least 60 days before submission, and the detailed information about any unfixed bugs discovered after that cutoff is not included in this submission.




\section{Related Work}
\label{sec:rw}
Zero-knowledge proofs are vital for privacy and trust in decentralized systems, including cryptocurrencies and smart contracts. However, flaws in circuit constraints can compromise security by permitting malicious proofs or rejecting valid ones~\cite{tang2024zero,liang2025sok}. We summarize different approaches to automatically detect various types of ZKP bugs below (see Appendix~\ref{appendix:comparison} for the comparison table).

\noindent{\textbf{Static Analysis.}} These approaches are widely used but suffer from high false positives. For example, Circomspect~\cite{circomspect} flags weak assignments (\verb|<--|) indiscriminately, leading to many false positives. ZKAP~\cite{wen2024practical} improves accuracy using circuit dependence graphs, yet it still identifies unconstrained inputs as potential vulnerabilities, which may be expected behavior. Despite identifying issues, static analyzers can't generate counterexamples, requiring manual verification of flagged bugs.

\noindent{\textbf{Formal Methods.}} Formal methods rigorously verify ZK circuit correctness but struggle with scalability and require manual annotations. Tools like CIVER~\cite{isabel2024scalable} and Constraint Checker~\cite{fan2024snarkprobe} need human input, while others like CODA~\cite{liu2024certifying} require rewriting circuits in another language. Fully automatic tools like Picus~\cite{pailoor2023automated} and ConsCS~\cite{jiang2025conscs} use SMT solvers but still face scalability challenges and offer narrower definitions of under- and over-constrained circuits than ours. For example, they don’t capture under-constrained circuits caused by unexpected inputs~\cite{wen2024practical}.

\noindent{\textbf{Dynamic Testing.}} Fuzzing and mutation testing have been used to detect bugs in ZKP systems~\cite{hochrainer2024fuzzing,leeb2024metamorphic,xiao2025mtzk}, typically focusing on compiler bugs. In contrast, we target security vulnerabilities in individual ZK circuits.

\section{Limitations and Future Work}

\noindent{\textbf{Limitations.}} \sys adopts a fuzzing approach for bug detection and thus inherits certain limitations. For example, the execution time is proportional to the size of the target program, and slicing and partially executing the target might improve \sys. Our prototype for Noir also remains preliminary, lacking guiding and target selection methods.

\noindent{\textbf{Future Work.}} A primary avenue for improvement is to enhance Noir support by adding more guiding and target selectors. {In addition, since TCCT does not depend on any particular DSL and many DSLs compile down to standardized constraint formats such as R1CS, TCCT - and consequently zkFuzz - can be applied across a wide range of ZK languages by targeting these common trace/constraint representations, regardless of surface-level differences in syntax or semantics.} Integrating \sys with formal verification, static analyzers, and ML-based bug detection~\cite{zhang2025low,she2019neuzz,he2024large,gan2024defialigner} might also further boost its scalability and flexibility.

\section{Conclusion}

We present \sys, a fuzzing framework for finding ZK circuit bugs based on \model—a language-independent bug model. \sys uses guided mutations and inputs to detect under- and over-constrained circuits without false positives, uncovering \nnewbugs previously unknown bugs in real-world circuits.



\section*{Acknowledgments}

We sincerely thank the anonymous reviewers and our shepherd for their insightful and constructive feedback. This work was partially supported by the Columbia Center for Digital Finance and Technologies and the Funai Overseas Scholarship Foundation.

\bibliographystyle{IEEEtran}
\bibliography{ref}

\appendices
\section{Proofs}
\label{appendix:proof}

\begin{proof}[Proof of Theorem~\ref{thm:under-constrained-detection-np-complete}]

First, we will reduce the Boolean Satisfiability Problem (SAT) to the problem of deciding whether a program is under-constrained (or over-constrained). SAT is known to be NP-complete.

(\textbf{Detection of Under-Constrained Instance}): Let $\phi = \bigwedge^{m}_{j=1} \phi_j$ be a Boolean formula in conjunctive normal form (CNF) with variables $x_1, \ldots, x_n$, where $\phi_j$ is the $j$-th conjunction of $\phi$ and $m$ is the total number of conjunctions. Then, we construct a program $\langle \mathcal{P}, \mathcal{C} \rangle$ as follows:

\begin{align*}
 &\mathcal{P}: \mathbb{F}^{n}_2 \rightarrow ((\mathbb{F}^{0}_2 \cup \{\bot\}) \times (\mathbb{F}_2^{0} \cup \{\bot\})) \\
 &\mathcal{C}: \{\mathbb{F}^{n}_2 \times \mathbb{F}^{0}_2 \times \mathbb{F}^{0}_2 \rightarrow \{0, 1\}\} \\
 \cline{1-2}
 &\forall{(x_1, \ldots, x_n) \in \mathbb{F}^n_2}, \quad \mathcal{P}((x_1, \ldots, x_n)) = (\bot, \bot) \\
&\mathcal{C} = \phi
\end{align*} Since $\mathcal{P}$ is a program that always returns ($\bot$, $\bot$), $\mathcal{T}(\mathcal{P}) = \emptyset$.

Now, we claim that $\langle \mathcal{P}, \mathcal{C} \rangle$ is under-constrained if and only if $\phi$ is satisfiable.

($\Rightarrow$) If $\langle \mathcal{P}, \mathcal{C} \rangle$ is under-constrained, there exists $(x, y) \in \Pi(\mathcal{S}(\mathcal{C}))$ such that $(x, y) \notin \Pi(\mathcal{T}(\mathcal{P}))$. This means that there exists at least one $(x, (), y) \in \mathcal{S}(\mathcal{C})$. This implies that $x$ satisfies $C = \phi$. Thus, $x$ is a satisfying assignment for $\phi$.

($\Leftarrow$) If $\phi$ is satisfiable, let $x'$ be a satisfying assignment. Then, we have $(x', ()) \in \Pi(\mathcal{S}(\mathcal{C}))$ because $x'$ satisfies all clause constraints. However, $(x', ()) \notin \Pi(\mathcal{T}(\mathcal{P}))$ because $\mathcal{T}(\mathcal{P})$ is empty. Thus, $\langle \mathcal{P}, \mathcal{C} \rangle$ is under-constrained.

Since the entire reduction can be performed in polynomial time, we have shown that deciding whether $\langle \mathcal{P}, \mathcal{C} \rangle$ is under-constrained is at least as hard as SAT. Therefore, deciding whether a program is under-constrained is NP-hard.

$\quad$

(\textbf{Detection of Over-Constrained Instance}): Likewise, we construct a program $\langle \mathcal{P}, \mathcal{C} \rangle$ as follows:

\begin{align*}
 &\mathcal{P}: \mathbb{F}^{n}_2 \rightarrow (\mathbb{F}^{0}_2 \cup \{\bot\}) \times (\mathbb{F}_2^{1} \cup \{\bot\}) \\
 &\mathcal{C}: \{\mathbb{F}^{n}_2 \times \mathbb{F}^{0}_2 \times \mathbb{F}^{1}_2 \rightarrow \{0, 1\}\} \\
 \cline{1-2}
 &\mathcal{P}(x) = \begin{cases}
        ((), (1)) &\quad \text{if all clauses in $\phi$ is satisfied with $x$} \\
     (\bot, \bot) &\quad \text{otherwise}
 \end{cases}  \\
&\mathcal{C} = false
\end{align*} \noindent Since $\mathcal{C}$ always returns false, $\mathcal{S}(\mathcal{C}) = \emptyset$. 

Now, we claim that $\langle \mathcal{P}, \mathcal{C} \rangle$ is over-constrained if and only if $\phi$ is satisfiable.

($\Rightarrow$) If $\langle \mathcal{P}, \mathcal{C} \rangle$ is over-constrained, there exists $(x, z, y) \in \mathcal{T}(\mathcal{P})$ such that $(x, z, y) \notin \mathcal{S}(\mathcal{C})$. This implies that $x$ satisfies all clauses in $\phi$, meaning $x$ is a satisfying assignment for $\phi$.

($\Leftarrow$) Suppose $\phi$ is satisfiable, and let $x'$ be a satisfying assignment. Then, we have $(x', (), (1)) \in \mathcal{T}(\mathcal{P})$ because $x'$ satisfies all clauses in $\phi$. However, $(x', (), (1)) \notin \mathcal{S}(\mathcal{C})$ because $\mathcal{S}(\mathcal{C})$ is empty. Thus, $\langle \mathcal{P}, \mathcal{C} \rangle$ is over-constrained.

Since the entire reduction can be performed in polynomial time, we have shown that deciding whether $\langle \mathcal{P}, \mathcal{C} \rangle$ is over-constrained is at least as hard as SAT. Therefore, deciding whether a program is over-constrained is NP-hard.

(\textbf{Complexity of \model}): Next, we prove TCCTs's co-NP-completeness by reducing the Boolean tautology problem, a known co-NP-complete problem, to it.

Let $\phi_1$ and $\phi_2$ be Boolean formulas with variables $x = (x_1, \ldots, x_n)$. The Boolean equivalence problem asks whether  $\phi_1 \equiv \phi_2$, i.e., $\phi_1$ and $\phi_2$ evaluate to the same value for all possible assignments. The Boolean tautology problem aims to determine whether all possible assignments to a Boolean formula yield true. 

\begin{lemma}
    The Boolean equivalence problem is reducible to the Boolean tautology problem.
\end{lemma}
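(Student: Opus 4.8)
The plan is to exhibit a polynomial-time many-one (Karp) reduction that maps an instance of the Boolean equivalence problem to an instance of the Boolean tautology problem while preserving the yes/no answer. The guiding observation is that two formulas agree on every assignment precisely when their biconditional is valid, so the biconditional is the natural target of the reduction.

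First I would, given an equivalence instance $(\phi_1, \phi_2)$ over variables $x = (x_1, \ldots, x_n)$, construct the single formula
\[
\psi := (\phi_1 \leftrightarrow \phi_2) = (\neg \phi_1 \lor \phi_2) \land (\neg \phi_2 \lor \phi_1),
\]
where the second form expresses $\psi$ using only the standard connectives so that it remains in the same syntactic class over which the tautology problem is posed. Since this wraps a constant number of new connectives around copies of $\phi_1$ and $\phi_2$, we have $|\psi| = O(|\phi_1| + |\phi_2|)$, and the map $(\phi_1, \phi_2) \mapsto \psi$ is computable in linear, hence polynomial, time.

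Next I would establish correctness by arguing pointwise over assignments. For any assignment $a \in \mathbb{F}_2^n$, the semantics of the biconditional give that $\psi(a)$ is true if and only if $\phi_1(a) = \phi_2(a)$. Quantifying universally, $\psi$ is a tautology---true under every $a$---if and only if $\phi_1(a) = \phi_2(a)$ holds for every $a$, which is exactly the condition $\phi_1 \equiv \phi_2$. Hence $(\phi_1, \phi_2)$ is a yes-instance of the equivalence problem if and only if $\psi$ is a yes-instance of the tautology problem, which completes the reduction.

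I do not expect a substantive obstacle here; the statement is elementary. The only points that warrant care are confirming that the rewritten $\psi$ stays within whatever formula class the tautology problem is formally stated over (so that the image is a legitimate instance), and verifying the linear size bound on $\psi$ so the reduction is genuinely polynomial time. Both checks are routine, and this lemma then serves its intended role as the bridge from the tautology problem to the Boolean equivalence problem in the subsequent co-NP-completeness argument for \model.
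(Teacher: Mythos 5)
Your reduction is correct and is essentially the same as the paper's: both map $(\phi_1,\phi_2)$ to a formula expressing $\phi_1 \leftrightarrow \phi_2$ and ask whether it is a tautology, the only difference being that you expand the biconditional in conjunctive form $(\neg\phi_1 \lor \phi_2) \land (\neg\phi_2 \lor \phi_1)$ while the paper uses the disjunctive form $(\phi_1 \land \phi_2) \lor (\neg\phi_1 \land \neg\phi_2)$. Your added remarks on the linear size bound and pointwise correctness are fine but not a substantive departure.
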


\begin{proof}
   $\phi_1 \equiv \phi_2$ if and only if $(\phi_1 \land \phi_2) \lor ((\neg \phi_1) \land (\neg \phi_2))$ is a tautology.
\end{proof}

\begin{lemma}
    The Boolean tautology problem is reducible to the Boolean equivalence problem.
\end{lemma}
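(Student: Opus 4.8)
The plan is to exhibit a direct, polynomial-time many-one reduction that sends an instance of the tautology problem to an instance of the equivalence problem. Given a Boolean formula $\phi$ over variables $x = (x_1, \ldots, x_n)$ whose tautology we wish to decide, I would build the equivalence instance $(\phi, \top)$, where $\top$ denotes a fixed, syntactically trivial tautology such as $x_1 \vee \neg x_1$ (or, when $\phi$ has no variables, the constant $1$). The reduction simply pairs $\phi$ with this constant-true formula and outputs the resulting equivalence query $\phi \equiv \top$.

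The correctness argument rests on a single observation: a formula is a tautology exactly when it agrees with the constant-true formula on every assignment. I would argue both directions explicitly. If $\phi$ is a tautology, then it evaluates to true under every assignment; since $\top$ also evaluates to true under every assignment, $\phi$ and $\top$ take the same value on all inputs, hence $\phi \equiv \top$. Conversely, if $\phi \equiv \top$, then under every assignment $\phi$ takes the same value as $\top$, namely true, so $\phi$ is a tautology. Together these implications yield $\phi \text{ is a tautology} \iff \phi \equiv \top$, which is precisely the equivalence instance the reduction produces.

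Finally, I would observe that the transformation runs in polynomial (indeed linear) time, as it only appends a constant-size subformula, making it a valid many-one reduction. Combined with the preceding lemma (equivalence reduces to tautology), this shows the two problems are polynomial-time interreducible, the fact subsequently used to transfer the co-NP-completeness of the tautology problem to equivalence and thence to \model. The construction has essentially no hard step; the closest thing to an obstacle is the degenerate case in which $\phi$ has no variables, which I would dispatch by letting $\top$ be the constant $1$ so that the emitted equivalence instance is always well-formed.
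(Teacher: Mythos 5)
Your proposal is correct and matches the paper's proof, which likewise reduces tautology to equivalence by testing $\phi \equiv \mathrm{TRUE}$; you merely spell out the two directions and the degenerate case more explicitly. No further comment is needed.
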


\begin{proof}
    $\phi_1$ is a tautology if and only if $\phi_1 \equiv \hbox{TRUE}$
\end{proof}

By these lemmas, the Boolean tautology problem and the Boolean equivalence problem are polynomial-time reducible to each other and, thus, are of equivalent complexity.

Given two Boolean formulas $\phi_1$ and $\phi_2$, we construct an instance of TCCT $\langle \mathcal{P}, \mathcal{C} \rangle$ over $\mathbb{F}_2$ as follows:

\begin{align*}
 &\mathcal{P}: \mathbb{F}^{n}_2 \rightarrow (\mathbb{F}^{0}_2 \cup \{\bot\}) \times (\mathbb{F}_2^{1} \cup \{\bot\}) \\
 &\mathcal{C}: \{\mathbb{F}^{n}_2 \times \mathbb{F}^{0}_2 \times \mathbb{F}^{1}_2 \rightarrow \{0, 1\}\} \\
 \cline{1-2}
 &\mathcal{P}((x_1, \ldots, x_n)) = \begin{cases}
     ((), (1)) &\text{if $(x_1, \ldots, x_n)$ satisfies $\phi_1$} \\
     ((), (0)) &\text{otherwise} \\
 \end{cases} \\
&\mathcal{C}(x, z, y) = \begin{cases}
            1 &\text{if $x$ satisfies $\phi_2$ and $y=1$} \\
            1 &\text{if $x$ does not satisfy $\phi_2$ and $y=0$} \\
            0 &\text{otherwise}
        \end{cases}
\end{align*}

\noindent Since there are no intermediate values, Eq.~\ref{tcct-condition} is equivalent to $\mathcal{T}(\mathcal{P}) = \mathcal{S}(\mathcal{C})$. By the definition of $\mathcal{P}$ and $\mathcal{C}$, we have that $\phi_1 \equiv \phi_2$ if and only if $\langle \mathcal{P}, \mathcal{C} \rangle$ is well-constrained.

This reduction can be performed in polynomial time. Since the Boolean tautology problem is co-NP-complete~\cite{arora2009computational} and we have shown a polynomial-time reduction to TCCT, we conclude that it is co-NP-complete.

\end{proof}

\section{Additional Examples of ZK Program}
\label{appendix:examples}

\indent{\textbf{Correct Implementation of RShift1.}} Code~\ref{lst:correct-1brshift} presents a secure implementation of a 1-bit right shift in Circom. The input is first converted into a bit array, the shift operation is applied to this array, and the result is then converted back into the output.

\begin{lstlisting}[language=Circom, style=circomstyle, caption={Secure implementation of 1-bit right shift.}, label={lst:correct-1brshift}]
template RShift1(N) {
 signal input x;
 signal output y;
 signal out <== Num2Bits(N)(x);
 signal bits[N-1];
 for(var i=0; i<N-1; i++) {bits[i] <== out[i+1];}
 y <=== Bits2Num(N - 1)(bits); }
\end{lstlisting}

\noindent{\textbf{LessThan.}} While explicit constraints are readily apparent in circuit designs, implicit assumptions used in Circom's de facto standard library, \textit{circomlib}~\cite{circomlib}, can also introduce subtle under-constrained bugs if not properly accounted for. For example, \texttt{LessThan} circuit from \textit{circomlib}, which checks whether the first input is lower than the second input, implicitly assumes that both inputs are represented by n bits or fewer. The critical vulnerability arises from the potential overflow in Line 7 of \texttt{LessThan} when input a is excessively large. This is due to the modular arithmetic operations over $\bmod q$. To illustrate, consider a case where $q = 17$, $a = 10$, and $b = 1$. The computation in Line 7 of \texttt{LessThan} becomes $10 + (1 \ll 3) - 1 \equiv 0 \ \bmod 17$, erroneously indicating that $a < b$. 

This class of vulnerabilities, stemming from the violation of implicit assumptions of templates, can be incorporated into our TCCT framework. The easiest way for this case is to insert one additional assert to the computation $\mathcal{P}$: \texttt{assert((in[0] < in[1] and out == 1) or (in[0] >= in[1] and out == 0))}.

Consider the real-world ZK program in Code~\ref{lst:dsk} from \textit{zk-regex}~\cite{zkregex} which validates the range of the given characters and is expected to abnormally terminate when they contain values larger than 255: This template uses the \texttt{LessThan} circuit, although the \texttt{EmailAddrRegex} template does not validate the bitwidth of the inputs provided to \texttt{LessThan}. To illustrate the attack vector, suppose $q = 401$ and \texttt{msg[0]} $=400$. The computation in Line 7 of \texttt{LessThan} becomes $400 + (1 \ll 8) - 255 \equiv 0 \ \bmod 401$. Hence, \texttt{LessThan} outputs $1$, erroneously passing the range check of \texttt{EmailAddrRegex}. Our fuzzer successfully discovered this bug, which the developer subsequently confirmed.

\begin{figure}[!ht]
\begin{minipage}[t]{0.415\linewidth}
\begin{lstlisting}[language=Circom, style=circomstyle,  caption={\texttt{LessThan} template from \textit{circomlib}~\cite{circomlib}. This template assumes that the inputs are at most \texttt{n} bits wide. If the inputs exceed \texttt{n} bits, the comparison result may be incorrect.}, label={lst:lessthan}]
template LessThan(n){
 assert(n <= 252);
 signal input in[2];
 signal output out;
 component n2b = Num2Bits(n+1);
 n2b.in <== in[0] + (1<<n) - in[1];
 out <== 1 - n2b.out[n];}
\end{lstlisting}
    \end{minipage}
\hfill
\begin{minipage}[t]{0.534\linewidth}
\begin{lstlisting}[language=Circom, style=circomstyle,  caption={Example of an under-constrained circuit from \textit{zk-regex}~\cite{zkregex}. \texttt{LessThan} implicitly assumes that the bit length of both inputs are not longer than \texttt{n}, although \texttt{LessThan} itself does not constrain the range of inputs.}, label={lst:dsk}]
template EmailAddrRegex(n){
 signal input msg[n];

 signal in_range_checks[n];
 for (var i=0; i<n; i++) {
  in_range_checks[i] <== LessThan(8)([msg[i], 255]);
  in_range_checks[i] === 1;
  // the rest is omitted
\end{lstlisting}
    \end{minipage}
\end{figure}

\noindent{\textbf{Over-Constrained Circuit.}} As described in \S~\ref{subsec:circom-lang}, the default configuration of the Circom compiler does not provide an operator that adds a condition to the constraints $\mathcal{C}$ without adding the corresponding assignment or assertion to the computation $\mathcal{P}$. Consequently, the trace set $\mathcal{T}(\mathcal{P})$ is always a subset of the constraint satisfaction set $\mathcal{S}(\mathcal{C})$, meaning that over-constrained circuits cannot exist.

\begin{table}[!ht]
    \begin{minipage}[t]{0.47\linewidth}
    \begin{lstlisting}[language=Circom, style=circomstyle,  caption={Example of over-constrained circuit. Suppose \texttt{assert} is not added in the computation for \texttt{===}.}, label={lst:over-div}]
template SplitReward(){
 signal input x;    
 signal z;
 signal output y;  
 z <-- x \ 2;
 z * 2 === x;
 y <== z + 1;}
    \end{lstlisting}
    \end{minipage}
    \hfill
        \centering
    \begin{minipage}[t]{0.48\linewidth}
        \centering
        \caption{Trace and Constraint Satisfaction Sets of Code~\ref{lst:over-div} ($q=5$).}
        \label{tab:over-div}
        \begin{tabular}{cccccc}
        \hline
        \multicolumn{3}{c}{$\mathcal{T}(\mathcal{P})$}         & \multicolumn{3}{c}{$\mathcal{S}(\mathcal{C})$} \\ \hline
        x & z & \multicolumn{1}{c|}{y} & x    &  z  & y       \\ \hline
        0 & 0 & \multicolumn{1}{c|}{1}   & 0      &  0 & 1         \\
        \textcolor{red}{1} & \textcolor{red}{0} & \multicolumn{1}{c|}{\textcolor{red}{1}}   & \textcolor{red}{1}    &     \textcolor{red}{3} &  \textcolor{red}{4}        \\
        2 & 1 & \multicolumn{1}{c|}{2}   & 2     &  1  & 2         \\
        \textcolor{red}{3} & {\textcolor{red}{1}} & \multicolumn{1}{c|}{\textcolor{red}{2}}   & \textcolor{red}{3}    &  \textcolor{red}{4}   & \textcolor{red}{0}         \\
        4 & 2 & \multicolumn{1}{c|}{3}   & 4   &  2    & 3         \\ \bottomrule
        \end{tabular}
    \end{minipage}
\end{table}

However, enabling the \texttt{constraint} \texttt{\_assert} \texttt{\_disabled} flag can lead to over-constrained circuits. Code~\ref{lst:over-div} presents a modified version of the \texttt{Reward} from~\cite{wen2024practical}, showing a ZK program that performs integer division. Note that "$\backslash$" operator in Circom is the integer division, while "$*$" operator is the multiplication modulo $q$. Suppose $q=5$ and \texttt{constraint\_assert\_disabled} is turned on. Then, Tab.~\ref{tab:over-div} shows the trace and constraint satisfaction sets. $\mathcal{S}(\mathcal{C})$ does not contain $(\text{x}=1, \text{z}=0, \text{y}=1)$ and $(\text{x}=3, \text{z}=1, \text{y}=2)$ of $\mathcal{T}(\mathcal{P})$, meaning that it is over-constrained ($\mathcal{T}(\mathcal{P}) \setminus \mathcal{S}(\mathcal{C}) \neq \emptyset$). Note that it is also under-constrained.

\noindent{\textbf{Hash-Check Pattern.}} Code~\ref{lst:hash-check-real} illustrates the real-world circuit implementing the hash-check pattern from the \textit{maci} project~\cite{maci2020}. Although this circuit is under-constrained due to insecure call of \texttt{LessThan} within \texttt{Quin} \texttt{Tree}\texttt{InclusionProof}, detecting this issue without the hash-check is difficult, as finding a valid input and Merkle root pair requires inverting the hash function.

\begin{lstlisting}[language=Circom, style=circomstyle, caption={Real example of Hash Check pattern from \textit{maci}~\cite{maci2020}. Determining a valid pair of input and Merkle root requires inverting the hash function.}, label={lst:hash-check-real}]
template QuinLeafExists(levels){
 var LEAVES_PER_NODE = 5;
 var LEAVES_PER_PATH_LEVEL = LEAVES_PER_NODE - 1;
 var i, j;
 
 signal input leaf, root;
 signal input path_index[levels];
 signal input path_elements[levels][LEAVES_PER_PATH_LEVEL];
 
 // Verify the Merkle path
 component verifier = QuinTreeInclusionProof(levels);
 verifier.leaf <== leaf;
 for (i = 0; i < levels; i ++) {
    verifier.path_index[i] <== path_index[i];
    for (j = 0; j < LEAVES_PER_PATH_LEVEL; j ++) {
        verifier.path_elements[i][j] <== path_elements[i][j];}}
 root === verifier.root;
}
\end{lstlisting}

\section{Non-Deterministic Witness Generation}
\label{appendix:non-deterministic-witness}

{TCCT’s generality allows it to avoid false positives that arise in prior definitions like Picus~\cite{pailoor2023automated}, Ac4~\cite{chen2024ac4}, and Conscs~\cite{jiang2025conscs} when dealing with inherently non-deterministic programs. While currently widely used DSLs like Circom are deterministic, it is plausible that future DSLs like CODA [21] might introduce inherent non-determinism (e.g., through concurrency). In such a setting, a program $\mathcal{P}$ could legitimately produce multiple valid outputs \texttt{y}, with constraints $\mathcal{C}$ designed to accept exactly those outputs. In this case, TCCT correctly classifies $(\mathcal{P}, \mathcal{C})$ as well-constrained, while prior works will produce a false positive. To understand the effect, consider the following Circom-like program inspired by the real Decoder circuit in circomlib. Here, \texttt{random\_binomial} non-deterministically returns either 0 or 1. The Decoder takes input \texttt{inp} and produces two outputs:}

\begin{itemize}
    \item {\texttt{out}, a binary array decoded from \texttt{inp}, and}
    \item {\texttt{success}, a boolean flag indicating whether decoding succeeds (specifically, decoding is successful if and only if the \texttt{inp}-th entry of out is 1).}
\end{itemize}

{In this case, although parts of the output (\texttt{out}) are non-deterministic, the constraints ensure validity only when the values remain consistent with the program’s definition. Under TCCT, such a program is therefore not classified as under-constrained. In contrast, prior formulations that consider only the constraints—without accounting for the program semantics—would incorrectly label this circuit as under-constrained, leading to a false positive.}

\begin{lstlisting}[language=Circom, style=circomstyle, caption={{Example of non-deterministic but well-constrained ZK program. TCCT correctly concludes that this is well-constrained.}}, label={lst:non-det}]
template Decoder(w) {
    signal input inp;
    signal output out[w], success;
    Var lc = 0;
    
    for (var i = 0; i < w; i++) {
	out[i] <-- random_binomial();
       out[i] * (inp - i) === 0;
       lc = lc + out[i];
    }
    lc ==> success;
    success * (success - 1) === 0;
}
\end{lstlisting}

\section{Additional Experiments and Analysis}
\label{appendix-additional-experiments}

\paragraph{RQ7: Alternative Mutation Strategies} 
\label{appendix:optionalmutation}

In addition, we explore multiple strategies for input generation and program mutation. First, we apply a genetic algorithm to optimize input generation, using an error-based fitness score similar to the optimization in program mutation. Specifically, the fitness score of an input is defined as the minimum score across multiple program mutants. As shown in the left figure of Fig.~\ref{fig:mutation-strategy}, although performance eventually converges to the same level as the default setting, it remains worse throughout the process. This suggests that, due to the vast search space, guiding only program mutation with the fitness function, while feeding generated inputs, achieves a better balance between exploration and exploitation than guiding both program mutation and input generation with the fitness function, which skews more toward exploitation.

\begin{figure}[!ht]
    \centering
    \includegraphics[width=\linewidth]{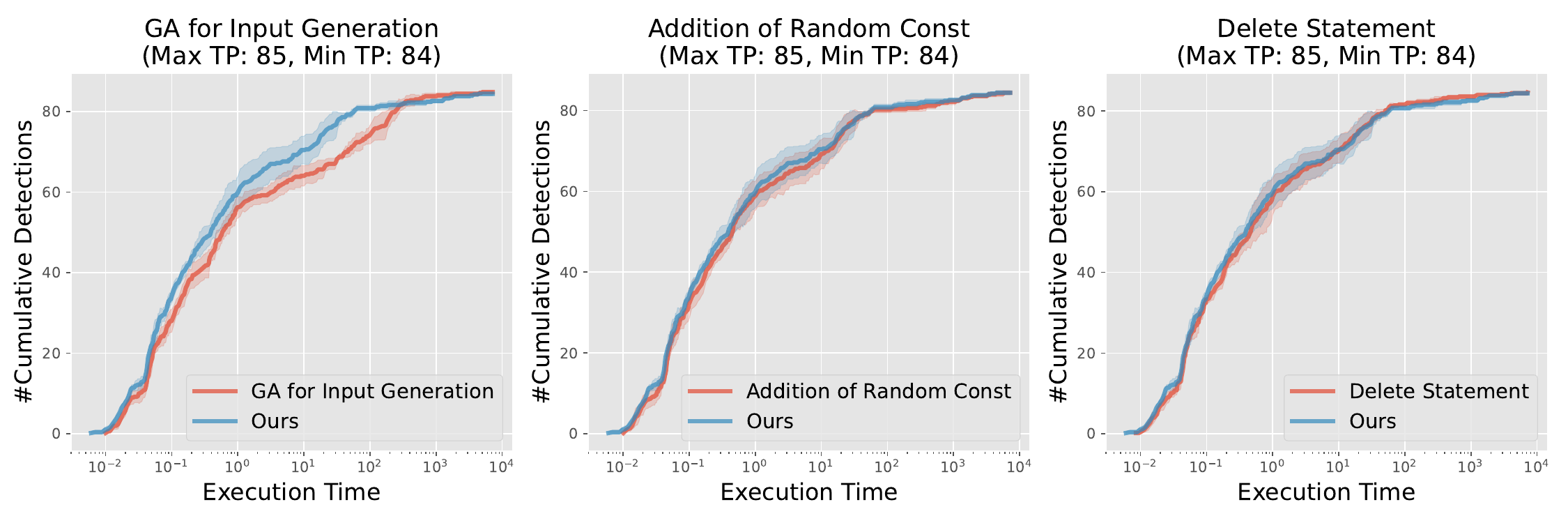}
    \caption{Various input generation and program mutation strategies.}
    \label{fig:mutation-strategy}
\end{figure}

We also investigate two additional program mutation strategies: (1) adding a random constant to the right-hand side of a weak assignment, such as transforming \texttt{z <-- x} into \texttt{z <-- x + 1234}, and (2) removing the statement of the weak assignment entirely. In Circom, all signals are initialized to zero upon declaration, so eliminating a weak assignment ensures that the left-hand variable remains zero throughout execution. These new strategies are applied with a probability of 0.2. However, their performance remains nearly identical to the default setting.

\subsection{RQ8: Performance of Other Baselines} 
\label{appendix:naive-performance}

Tab.~\ref{tab:modified-tools} presents the performance of the modified Circomspect, which recursively analyzes all internally called templates, and ZKAP with the Unconstrained-Signal detector. While recursive Circomspect detects all vulnerabilities, it also increases false positives, causing precision to drop sharply to 0.33. As discussed in \S~\ref{sec:formulation}, Circomspect cannot detect misuse of assert. However, in this dataset, it happens to identify all circuits containing real bugs, mainly because it incorrectly flags code that is not actually problematic. For ZKAP, the Unconstrained-Signal detector does not contribute to finding new bugs in our benchmarks.

\begin{table}[!ht]
\centering
\caption{Performance of Circomspect with recursive analysis and ZKAP with the US and the USCO detector.}
\label{tab:modified-tools}
\begin{tabular}{@{}lcccccc@{}}
\toprule
Constraint Size & \multicolumn{3}{c}{\begin{tabular}[c]{@{}c@{}}Circomspect\\ (Recursive)\end{tabular}} & \multicolumn{3}{c}{\begin{tabular}[c]{@{}c@{}}ZKAP\\ (with US and USCO)\end{tabular}} \\ \cmidrule(l){2-4}  \cmidrule(l){5-7}
 & TP & FP & Prec. & TP & FP & Prec. \\ \midrule
Small & 58 & 28 & 0.67 & 37 & 23 & 0.61 \\
Medium & 7 & 23 & 0.23 & 2 & 29 & 0.06 \\
Large & 17 & 30 & 0.36 & 4 & 8 & 0.33 \\
Very Large & 6 & 100 & 0.05 & 2 & 29 & 0.06 \\ \midrule
Total & 88 & 181 & 0.33 & 45 & 89 & 0.34 \\ \bottomrule
\end{tabular}
\end{table}

\subsection{RQ9: Impact Assessment of Vulnerabilities} 

Table~\ref{tab:bug-impact-assessment} presents an analysis of the impacts of previously unknown bugs discovered by \sys, categorizing each unique bug into seven types: (1) \textit{incorrect primitive operations}, such as faulty integer division or string matching; (2) \textit{game cheating}, such as falsifying the winner; (3) \textit{identity forgery}, such as bypassing age verification; (4) \textit{cryptographic computation bugs}, such as incorrect hash or ciphertext computations; (5) \textit{ownership forgery}, such as maliciously claiming possession of specific data; (6) \textit{incorrect reward calculation}, such as errors in incentive mechanisms; and (7) others. The most common type is ownership forgery, mainly due to incorrect range checks and misuse of the \verb|assert| statement.

\begin{table}[]
\centering
\caption{Bug impact categorization in previously unknown ZK bugs identified by \sys.}
\label{tab:bug-impact-assessment}
\begin{tabular}{@{}lc@{}}
\toprule
Type & \#Bugs \\ \midrule
Incorrect Primitive Operation & 11 \\
Game Cheating & 7 \\
Identity Forgery & 8 \\
Cryptographic Computation Bug & 4 \\
Ownership Forgery & 15 \\
Incorrect Reward Calculation & 7 \\
Others & 7 \\ \bottomrule
\end{tabular}
\end{table}

\subsection{RQ10: Over-Constrained Circuits} 
\label{appendix:oc}

To validate \sys's ability to detect over-constrained circuits, we conduct experiments with the \texttt{constraint\_} \texttt{assert\_} \texttt{disabled} flag enabled, allowing over-constrained circuits to exist. In this mode, the equality constraint \verb|===| adds a condition to the constraints set without inserting an \verb|assert| into the program. Consequently, developers likely need to add the corresponding \verb|assert| manually in the Circom file to prevent over-constrained circuits. Since most circuits are designed to function under the Circom compiler’s default configuration, enabling this flag renders many circuits over-constrained. In our benchmark, \sys identifies 258 such cases out of \nallcircuits test cases. However, because these circuits typically run under the default compiler settings, such issues are unlikely to pose a security risk in practice.

\section{Comparison with Existing Works}
\label{appendix:comparison}

{Tab.~\ref{tab:comparison} presents a comparison between the capabilities of existing works and our \sys along different dimensions, demonstrating its superiority.}

\begin{table}[!th]
\centering
\caption{Comparison between \sys and existing detection methods. A partial circle represents partial capability. \sys demonstrates superior performance across all criteria, offering comprehensive automatic testing capabilities.}
\label{tab:comparison}
\begin{tabular}{@{}lccccc@{}}
\toprule
Method      & \begin{tabular}[c]{@{}l@{}}Auto-\\matic\end{tabular} & \begin{tabular}[c]{@{}l@{}}Under-\\Const-\\rained\end{tabular} & \begin{tabular}[c]{@{}l@{}}Over-\\Const-\\rained\end{tabular} & \begin{tabular}[c]{@{}l@{}}Counter\\ Example\end{tabular} & \begin{tabular}[c]{@{}l@{}}False\\ Posit-\\ve\end{tabular} \\ \midrule
CIVER~\cite{isabel2024scalable} & \xmark & \cmark  & \cmark & \cmark  & \safe \\
CODA~\cite{liu2024certifying} & \xmark & \cmark  & \cmark & \cmark  & \safe \\
\begin{tabular}[l]{@{}l@{}}Constraint\\- Checker\end{tabular}~\cite{fan2024snarkprobe} & \xmark & \cmark  & \cmark & \cmark  & \safe \\
Circomspect~\cite{circomspect} & \cmark & \partialcircle        &   \partialcircle     & \xmark  & \dangerous \\
Picus~\cite{pailoor2023automated}  & \cmark &\partialcircle  & \xmark & \cmark  & \safe \\
ConsCS~\cite{jiang2025conscs}  & \cmark &\partialcircle  & \xmark & \cmark  & \safe \\
AC4~\cite{chen2024ac4}  & \cmark &\partialcircle  & \partialcircle & \cmark  & \safe \\
ZKAP~\cite{wen2024practical} & \cmark &\partialcircle  & \partialcircle & \xmark  & \dangerous \\
\textbf{\sys (Ours)} & \cmark & \cmark  & \cmark & \cmark  & \safe \\ \bottomrule
\end{tabular}
\end{table}

\end{document}